\def\CC{{\mathcal C}}
\def\DD{{\mathtt D}}
\def\CE{{\mathcal E}}
\def\CM{{\mathcal M}}
\def\CP{{\mathcal P}}
\def\CR{{\mathcal R}}
\def\CT{{\mathcal T}}
\def\CZ{{\mathcal Z}}
\def\be{\begin{equation}}
\def\ee{\end{equation}}
\def\bea{\begin{eqnarray}}
\def\eea{\end{eqnarray}}
\newcommand{\tq}{{\mathtt q}}
\newcommand{\1}{{1}}
\newtheorem{definition}{Definition}[section]
\newtheorem{remark}{Remark}[section]
\begin{document}

\begin{center}{\Large \textbf{
Monotones from multi-invariants: a classification
}}\end{center}

\begin{center}
Abhijit Gadde\textsuperscript{1},
Shraiyance Jain\textsuperscript{2}
\end{center}

\begin{center}
Department of Theoretical Physics \\ 
Tata Institute for Fundamental Research, Mumbai 400005\\
abhijit@theory.tifr.res.in
\end{center}

\begin{center}
\today
\end{center}


\section*{Abstract}
{\bf
In this paper we study local unitary invariants of a multi-partite quantum state that are monotonic, on average,  under local operations and classical communication ({\tt locc}). In particular we focus on local unitary invariants that are constructed out of polynomials in the state and its conjugate - called multi-invariants. Multi-invariants are labeled by certain types of graphs.  Recently, in \cite{Gadde:2024jfi}, the authors 
related the condition of monotonicity under {\tt locc} to a graph theoretic condition on the multi-invariant called edge-convexity. In this paper, we conjecture a complete classification of edge-convex multi-invariants. The conjecture states that the edge-convex multi-invariants are labeled by finite Coxeter groups. We prove this conjecture for all but six cases. 
}

\vspace{10pt}
\noindent\rule{\textwidth}{1pt}
\tableofcontents\thispagestyle{fancy}
\noindent\rule{\textwidth}{1pt}
\vspace{10pt}

\section{Introduction and summary}\label{intro}
In this paper we construct pure state entanglement monotones ({\tt PSEM}s) using multi-invariants. A {\tt PSEM} is an entanglement monotone restricted to pure states. It can also be defined independently as follows.
A local operation on an entangled pure states, in general gives rise to an ensemble of pure states. The above definition states that the {\tt PSEM} must not increase, on average, after any local quantum operation. Denoting the initial pure state to be $|\psi\rangle$ and  the ensemble after the local quantum operation as $\{p_i, |\psi_i\rangle\}$, where $p_i$ is the probability of state $|\psi_i\rangle$, we have the precise definition,
\begin{definition}[Pure state entanglement monotone ({\tt PSEM})\cite{Horodecki_2009}]\label{pure-em}
    A pure state entanglement $\nu(|\psi\rangle)$ monotone  $\nu(|\psi\rangle)$ obeys
    \begin{align}\label{locc-mono-pure}
        \nu (|\psi\rangle) \geq \sum_i p_i \nu (|\psi_i\rangle).
    \end{align}
\end{definition}
\noindent
Any local quantum operation transforms the state as
\begin{align}
        \quad|\psi_i\rangle := E^{(A)}_i |\psi\rangle /\sqrt{p_i} \qquad {\rm where} \qquad {p_i}:=|E^{(A)}_i |\psi\rangle|^2.\notag
\end{align}
where $E^{(A)}_i$ are linear operators on any party $A$ that preserve trace i.e. they obey $\sum_i E_i^{(A)\dagger} E^{(A)}_i={\mathbb I}$. It is also required that $\nu(|\psi\rangle)=0$ for fully factorized states.

One can extend {\tt PSEM} to mixed states using convex roof to obtain a full-fledged ``entanglement monotone'' \cite{Vidal_2000}. In addition to quantifying entanglement, entanglement monotones can also be used to put bounds on transition probabilities under local operations and classical communication. See \cite{Gadde:2024jfi} for a recent discussion and survey.

In this paper we consider local unitary invariant polynomials of the state and its conjugate. Such a polynomial invariant was  termed \emph{multi-invariant} in \cite{Gadde:2024taa}. We stick to this nomenclature. As we will discuss in section \ref{new}, $\tq$-partite multi-invariants are characterized by uniformly edge-labeled bi-partite graph i.e. a bi-partite graph whose every vertex neighborhood consists of $\tq$ edges with the same set of distinct $\tq$ labels. We call such graphs $\psi$-graphs. Such graphs occur naturally in group theory. Specifically, Cayley graph of a group with $\tq$ involutive generators is a uniformly edge-labeled graph with $\tq$ edge-labels, each edge-label corresponding to a generator. If all the relation are even-length words, then such a graph is also bi-partite and hence a $\psi$-graph. 
Such Cayley graphs play an important role in our analysis of monotonic multi-invariants and  are reviewed in appendix \ref{cayley}. 
Of course not all $\psi$-graphs are Cayley graphs. We will denote the $\psi$-graph as well as associated multi-invariant with a calligraphic letter such as $\CZ$. A normalized multi-invariant ${\hat \CZ}$ is defined as $\CZ^{1/n_\CZ}$ where $n_\CZ$ is the number of black (or white) vertices in the $\psi$-graph $\CZ$. Let us define  ${\hat \nu}(\CZ) := 1-{\hat \CZ}$. It was shown in \cite{Gadde:2024jfi}
\begin{restatable}{thm}{theorempsem}\label{theorem-psem}
    If $\CZ$ is connected and  edge-convex then ${\hat \nu}(\CZ)$ is a {\tt PSEM}.
\end{restatable}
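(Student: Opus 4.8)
The plan is to reduce the {\tt locc} monotonicity of $\hat\nu(\CZ)=1-\hat\CZ$ to a convexity statement about the multi-invariant, and then to identify that convexity with edge-convexity. First I would rewrite the inequality of Definition~\ref{pure-em}. Because the local operation is trace preserving, $\sum_i p_i=1$, so substituting $\hat\nu=1-\hat\CZ$ into \eqref{locc-mono-pure} makes the constant terms cancel and turns the {\tt PSEM} condition into the reversed inequality $\hat\CZ(|\psi\rangle)\le\sum_i p_i\,\hat\CZ(|\psi_i\rangle)$, i.e. $\hat\CZ$ must be convex on average under local operations. The normalization requirement $\hat\nu=0$ on fully factorized states is immediate, since on a product of normalized pure states the replica trace factorizes over the cycles of each matching and every factor equals one, giving $\CZ=\hat\CZ=1$. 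Since $\hat\CZ$ is local-unitary invariant I may absorb the unitary part of each Kraus operator and take $E_i=\sqrt{M_i}$ with $M_i\ge 0$ and $\sum_i M_i=\mathbb I$; and because a general {\tt locc} protocol is assembled from single-party generalized measurements interspersed with local unitaries, it suffices to establish the inequality for a measurement acting on one arbitrary party $A$.

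Next I would pass to the replica representation. Writing $\rho=|\psi\rangle\langle\psi|$ and encoding the $\psi$-graph by party-wise matchings $\sigma_B$ with permutation operators $P_{\sigma_B}$, one has $\CZ=\Tr\big[(\bigotimes_B P_{\sigma_B})\,\rho^{\otimes n}\big]$ with $n=n_\CZ$. Applying $E_i=\sqrt{M_i}$ on $A$ and using that $P_{\sigma_A}$ commutes with the identical factors of $E_i^{\otimes n}$, the Kraus operators collapse to an insertion of $M_i^{\otimes n}$ on the $A$-replicas, so that $p_i^{\,n}\,\CZ(|\psi_i\rangle)=\Tr\big[R_A\,M_i^{\otimes n}\big]=:\tilde\CZ(M_i)\ge 0$, where $R_A$ is the fixed operator on $(\CH_A)^{\otimes n}$ obtained by contracting the rest of the graph against $\rho^{\otimes n}$ and multiplying by $P_{\sigma_A}$. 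Then the averaged right-hand side simplifies to $\sum_i p_i\,\hat\CZ(|\psi_i\rangle)=\sum_i \tilde\CZ(M_i)^{1/n}$, while $\hat\CZ(|\psi\rangle)=\tilde\CZ(\mathbb I)^{1/n}$. Since $g(M):=\tilde\CZ(M)^{1/n}$ is homogeneous of degree one in $M$, the target inequality $g(\sum_i M_i)\le\sum_i g(M_i)$ is subadditivity, which for a homogeneous degree-one function on the positive cone follows from convexity of $g$ on $\{M\ge 0\}$.

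The heart of the argument is therefore to prove that $g(M)=\big(\Tr[R_A\,M^{\otimes n}]\big)^{1/n}$ is convex on the positive cone for every background state $\rho$ and every party $A$, and to show this is exactly what edge-convexity of $\CZ$ guarantees. As a warm-up one checks the single-cycle case, where $R_A=P_{\sigma_A}$ with $\sigma_A$ an $n$-cycle and $\tilde\CZ(M)=\Tr[(\rho_A^{1/2}M\rho_A^{1/2})^n]$; here $g$ is the Schatten $n$-norm of an operator linear in $M$, hence manifestly convex. For a general $\psi$-graph the operator $R_A$ couples the $n$ $A$-legs through the remaining parties according to the graph's connectivity, and convexity is no longer automatic. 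To make the condition checkable I would invoke the standard reduction that {\tt locc} monotonicity need only be verified for infinitesimal two-outcome operations $M_\pm=\tfrac12(\mathbb I\pm\epsilon H)$, $H=H^\dagger$: expanding to second order, the first-order terms cancel by trace preservation, and the surviving second variation is a quadratic form $Q_A(H)$ assembled from inserting $H$ on pairs of $A$-legs of the graph, together with the concave contribution of the root $t\mapsto t^{1/n}$. Convexity of $\hat\CZ$ is then equivalent to $Q_A(H)\ge 0$ for all Hermitian $H$ and all $\rho$.

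The main obstacle, which I expect to consume the bulk of the work, is the combinatorial evaluation of $Q_A(H)$. The paired insertions organize into contributions indexed by how two $A$-edges sit relative to the cuts of the graph, and one must show that, after summation, their sign is controlled by a discrete convexity of the edge count across these cuts — which is precisely the edge-convexity hypothesis. Connectedness enters to guarantee that $n_\CZ$ is the correct single exponent, so that $\hat\CZ$ is a genuine degree-one object and the concave root carries the right weight, and to exclude factorized directions in which $Q_A$ would degenerate. Establishing the dictionary ``edge-convex $\Rightarrow Q_A\succeq 0$ for all $\rho$,'' uniformly over all parties, is the crux; the analytic inputs are matrix-convexity results (convexity of Schatten norms and Lieb-type convexity of traces of operator products) that must be matched term by term to the graph-theoretic boundary structure encoded by edge-convexity.
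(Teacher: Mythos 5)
The paper does not actually prove this theorem --- it is imported verbatim from \cite{Gadde:2024jfi} --- so the comparison here is against the argument that reference supplies. Your reduction is correct and is essentially the standard first half of that argument: cancelling the constants to turn the {\tt PSEM} inequality into average-convexity of $\hat\CZ$, absorbing the unitary part of the Kraus operators, passing to $\tilde\CZ(M)=\Tr[R_A M^{\otimes n}]$ so that the probabilities drop out and the claim becomes subadditivity of the degree-one homogeneous function $g(M)=\tilde\CZ(M)^{1/n}$ on the positive cone. (Two small points you should make explicit: positivity of $\tilde\CZ(M)$, needed even to define the $n$-th root, is not free --- it follows because edge-convex graphs are edge-reflecting, hence admit reflecting cuts, hence $\CZ$ is a squared norm by Remark~\ref{reflecting-cut}; and $g$ fails to be $C^2$ where $\tilde\CZ$ vanishes, so the second-order criterion needs a limiting argument on the boundary of the cone.)

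The genuine gap is that the crux --- ``edge-convexity of the graph $\Rightarrow$ convexity of $g$'' --- is never proved; your last paragraph describes what would have to be shown and asserts that the signs ``are controlled by'' edge-convexity, but the entire mathematical content of the theorem lives in exactly that step, and the definition \eqref{sol-convex} is never actually used. In particular, your proposal makes no contact with the specific data in the edge-convexity condition: the reflecting cuts $k$, the matrices $\CM^{(k)}_{e,e'}$ summing to $1$ over cuts separating each pair of $A$-edges, and the positive semi-definiteness of $\CP^{(k)}$. The mechanism by which these enter is not a Hessian computation but an interpolation: each reflecting cut exhibits $\CZ$ (with insertions) as an inner product $\langle \CT_1|\cdots|\CT_1\rangle$, so Cauchy--Schwarz across the cut bounds a mixed insertion $\Tr[R_A\,M_{i_1}\otimes\cdots\otimes M_{i_n}]$ by geometric means of less-mixed ones; the numbers $\CM^{(k)}_{e,e'}$ play the role of H\"older weights distributed over the available cuts, the normalization $\sum_k\CM^{(k)}_{e,e'}=1$ ensures the exponents recombine into $\prod_\alpha\tilde\CZ(M_{i_\alpha})^{1/n}$, and positive semi-definiteness of $\CP^{(k)}$ is what legitimizes the weighted interpolation. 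Summing over the multinomial expansion of $\tilde\CZ(\sum_i M_i)$ then yields $\bigl(\sum_i\tilde\CZ(M_i)^{1/n}\bigr)^n$. Without carrying out this step (or an equivalent second-variation version of it), the proposal is a correct setup followed by a restatement of the theorem, not a proof.
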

\noindent 
The edge-convexity  property of $\psi$-graphs  was introduced in \cite{Gadde:2024jfi} and a class of edge-convex $\psi$-graphs was found. We review the edge-convexity condition in section \ref{edgeconvexdef}.

The main result of this paper is a conjecture that completely  classifies edge-convex graphs. 
\begin{restatable}{conj}{mainconj}\label{mainconj}
    A $\psi$-graph $\CZ$ is edge-convex if and only if it is a Cayley graph of a finite Coxeter group (with standard involutive generators).
\end{restatable}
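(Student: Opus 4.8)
The plan is to prove the two implications separately: the direction that Cayley graphs of finite Coxeter groups are edge-convex, and the structural converse that edge-convexity forces a $\psi$-graph to be such a Cayley graph. The whole argument is organised around the \emph{rank-two substructure} of a $\psi$-graph, the bicolored subgraphs obtained by keeping only the edges carrying two fixed labels $i,j$. In a Cayley graph of a Coxeter group $(W,S)$ with $|S|=\tq$ standard generators this substructure is completely understood: the $\{i,j\}$-subgraph is a disjoint union of $2m_{ij}$-cycles, one for each left coset of the dihedral subgroup $\langle s_i,s_j\rangle$, and these cycles are exactly the braid relations. This dictionary is what I would exploit throughout.

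For the forward direction I would take a finite Coxeter group $W$ and verify the edge-convexity condition of section \ref{edgeconvexdef} directly on its Cayley graph. The key facts are that geodesics are precisely reduced words, that distances are computed by the length function $\ell$, and that the deletion and exchange conditions govern how geodesics branch. Using these I expect edge-convexity to follow \emph{uniformly} for the infinite families $A_n$, $B_n$, $D_n$ and the dihedral groups $I_2(m)$, for which the relevant length and coset combinatorics are available in closed form. The six exceptional groups $H_3$, $H_4$, $F_4$, $E_6$, $E_7$, $E_8$ admit no such uniform model, so the forward direction for them must be checked group by group; this is precisely the computational obstacle that the six open cases represent.

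For the converse I would start from a finite, connected, edge-convex $\psi$-graph $\CZ$ and reconstruct a Coxeter group. First, edge-convexity should force, for each pair of labels $i,j$, all $\{i,j\}$-bicolored cycles to share a single even length $2m_{ij}$, producing a well-defined symmetric matrix $(m_{ij})$. Second, the uniform edge-labeling associates to each label $i$ the fixed-point-free involution $\iota_i$ that swaps the endpoints of every $i$-edge; I would use edge-convexity to show that the group $G=\langle \iota_1,\dots,\iota_{\tq}\rangle$ consists of label-preserving graph automorphisms acting simply transitively, so that $\CZ=\mathrm{Cay}(G)$ and $G$ is a quotient of the Coxeter group $W$ with matrix $(m_{ij})$. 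Third, and most importantly, I would argue that edge-convexity makes $\CZ$ \emph{simply connected} as a labeled graph, meaning that every closed walk decomposes into bicolored cycles; this removes all relations in $G$ beyond the braid relations and forces $G=W$. Since $\CZ$ is finite, $W$ is then a finite Coxeter group, as claimed.

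The hard part will be this local-to-global step. Edge-convexity is a condition on bounded neighborhoods, so a priori the same local data $(m_{ij})$ could be realised by a proper finite quotient of $\mathrm{Cay}(W)$ carrying extra relations, and one must rule this out. I expect the resolution to come from tracking how the edge-convex inequality behaves along the bicolored cycles and showing that any relation not generated by the braid relations would violate it; equivalently, that edge-convexity is the combinatorial shadow of positive-definiteness of the Tits form, which simultaneously enforces simple connectivity and finiteness. Carrying this argument through rigorously is the conceptual crux of the converse, while the six exceptional verifications in the forward direction are the concrete computations that remain out of reach, exactly the cases the conjecture leaves open.
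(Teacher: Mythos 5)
This statement is a conjecture, and the paper itself only establishes it partially; your proposal correctly identifies the six exceptional groups as the open cases, but it has the status of the two directions reversed and leaves concrete gaps in both. For the ``only if'' direction (edge-convex $\Rightarrow$ Coxeter Cayley graph), the paper's argument is actually \emph{complete} and much lighter than what you plan: edge-convexity trivially implies the edge-reflecting property (remark \ref{obvious1}), and edge-reflecting $\psi$-graphs are classified as Cayley graphs of finite Coxeter groups by theorem \ref{theorem2}, which rests on the mirror-graph classification of Mar\v{c} (theorem 2.8 of \cite{MARC2017115}). You instead propose to rebuild this classification from scratch out of the full edge-convexity condition, and your construction has a genuine error: the map $\iota_i$ that swaps the endpoints of \emph{every} $i$-edge is not a label-preserving automorphism of ${\rm Cay}(W,S)$ for nonabelian $W$ (already for $W=A_2$, the hexagon, $\iota_1$ sends the $2$-edge $\{e,s_2\}$ to the non-edge $\{s_1,s_2s_1\}$), so the group $\langle\iota_1,\dots,\iota_\tq\rangle$ you want to identify with $W$ does not act on the graph. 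The correct group is the extended replica symmetry $\hat\CR$ acting simply transitively (Sabidussi, plus theorem \ref{theorem1}), with generators the odd automorphisms fixing the edges at a base vertex; your ``Tits form / simple connectivity'' crux is precisely what \cite{MARC2017115} supplies, and it is obtained from the weaker edge-reflecting hypothesis, not from edge-convexity.

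For the ``if'' direction, your proposal does not actually produce a proof even for the infinite families: saying that edge-convexity should ``follow uniformly'' from reduced words, the length function and the exchange condition is a statement of intent, not a construction of the matrices $\CM^{(k)}_{e,e'}$ solving \eqref{sol-convex} with positive semi-definite $\CP^{(k)}$. The paper's actual mechanism is different and is the main technical content: lemma \ref{coset-general} reduces edge-convexity of ${\rm Cay}(G,S)$ to edge-convexity of a parabolic ${\rm Cay}(H,K)$ together with \emph{vertex}-convexity of the Schreier coset graph ${\rm Coset}(G/H,S\setminus K)$, and one then inducts ($A_n/A_{n-1}$ gives the $n$-simplex, $B_n/A_{n-1}$ and $B_n/B_{n-1}$ the hypercube and orthoplex, $D_n/A_{n-1}$ the demihypercube), exhibiting explicit rank-one or identity solutions on these polytopes; the same lemma applied to $G_1\otimes G_2$ handles disconnected CD diagrams (proposition \ref{cartesian}). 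If you want to salvage your route, you would need to either supply the explicit $\CM^{(k)}$ data on the Coxeter complex directly, or adopt the coset-graph reduction; as written, neither direction of your argument closes.
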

\noindent 
Classification of Coxeter groups using Coxeter-Dynkin diagrams (CD diagrams) in reviewed in appendix \ref{coxeter}. CD diagrams are arbitrary finite disconnected sums of the diagrams listed in the figure \ref{cd-diag}. If a finite Coxeter group $G_i$ is associated to the CD diagram $\DD_i$ then the disconnected sum $\DD_1\sqcup \DD_2$ corresponds to the direct product $G_1 \otimes G_2$, which is also a finite Coxeter group. 

We prove the conjecture partially. Denoting the $\psi$-graph that is the Cayley graph of a Coxeter group given by the CD diagram $\DD$ as $\CZ_{\DD}$ we have,
\begin{restatable}{prop}{cartesian}\label{cartesian}
    If $\CZ_{\DD_1}$ and $\CZ_{\DD_2}$ are both edge-convex then $\CZ_{\DD_1 \sqcup \DD_2}$ is edge-convex.
\end{restatable}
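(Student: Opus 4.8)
The plan is to reduce the claim to the single structural fact that $\CZ_{\DD_1 \sqcup \DD_2}$ is the Cartesian (box) product of the graphs $\CZ_{\DD_1}$ and $\CZ_{\DD_2}$. This is what the name of the proposition suggests, and it follows from the standard Cayley-graph fact reviewed in appendix \ref{cayley}: since the Coxeter group attached to $\DD_1 \sqcup \DD_2$ is the direct product $G_1 \otimes G_2$ with generating set the disjoint union $S_1 \sqcup S_2$ of the two standard involutive generating sets, a generator $s \in S_1$ sends a vertex $(g_1,g_2)$ to $(s g_1, g_2)$ while a generator $s \in S_2$ sends it to $(g_1, s g_2)$. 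Hence the vertex set of $\CZ_{\DD_1 \sqcup \DD_2}$ is the product of the two vertex sets, the bipartite two-coloring is the product coloring (a vertex is, say, black iff the total word length $\ell(g_1)+\ell(g_2)$ is even, using that $\ell$ is additive on a direct product), and each edge carries a label inherited from exactly one of the two factors, changing only the corresponding coordinate. First I would record this identification carefully, verifying in particular that the uniform edge-labeling of $\CZ_{\DD_1 \sqcup \DD_2}$ by $\tq_1+\tq_2$ labels is precisely the disjoint union of the labelings of the two factors.

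With this in hand, the second step is to show that the edge-convexity condition of section \ref{edgeconvexdef} factorizes along the two coordinates. The condition can be checked one label at a time (monotonicity is tested against local operations on a single party), and the labels split cleanly into those coming from $\DD_1$ and those coming from $\DD_2$. The key observation is that for a label $a$ belonging to $\DD_1$, the edges of $\CZ_{\DD_1 \sqcup \DD_2}$ carrying that label form $|G_2|$ disjoint copies of the $a$-labeled subgraph of $\CZ_{\DD_1}$, one copy for each fixed value of the $G_2$-coordinate; the symmetric statement holds for labels from $\DD_2$. I would therefore fix an arbitrary label and an arbitrary vertex configuration of the product graph, slice it into its fibers over the coordinate transverse to the chosen label, apply the factor edge-convexity inequality (which holds by hypothesis) to each fiber, and sum. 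Tracking the bipartite coloring through the slicing and using the normalization $n_{\CZ_{\DD_1 \sqcup \DD_2}} = n_{\CZ_{\DD_1}} \, n_{\CZ_{\DD_2}}$ should then reproduce the edge-convexity inequality for the product at that label.

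The main obstacle is precisely this recombination step, because a general configuration in a Cartesian product is not a product of configurations, so the slicing yields a family of distinct fibers rather than a single uniform one. Edge-convexity is, as its name indicates, a convexity statement, so the natural remedy is to exploit that a sum of the convex factor functionals over the fibers is again convex and inherits the required inequality term by term, without assuming the configuration is a product. The real work lies in showing that the transverse edges (which connect different fibers and belong only to the \emph{other} factor) do not spoil this fiberwise bookkeeping — that is, that the two families of labels genuinely decouple in the inequality. Once this decoupling is established for each label separately, edge-convexity of $\CZ_{\DD_1 \sqcup \DD_2}$ follows by ranging over all labels, and the hypothesis that both $\CZ_{\DD_1}$ and $\CZ_{\DD_2}$ are edge-convex is used exactly once for each family of labels.
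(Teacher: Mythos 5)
There is a genuine gap, and it sits exactly where you flag the ``main obstacle.'' Your fiberwise slicing only produces the edge-convexity data for pairs of $A$-edges ($A$ a label of $\DD_1$) that lie in the \emph{same} fiber $\CZ_{\DD_1}\times\{g_2\}$: for those pairs the reflecting cuts of $\CZ_{\DD_1}$ extend to cuts of the product and the factor solution $\CM^{(k)}_{e,e'}$ can be reused. But condition \eqref{sol-convex} is a requirement for \emph{every} pair of $A$-edges, including pairs $(e_1,g_2)$ and $(e_1',g_2')$ with $g_2\neq g_2'$. No cut inherited from $\CZ_{\DD_1}$ separates such a pair when $e_1=e_1'$, and in general the only cuts that can carry weight for cross-fiber pairs are the transverse ones, of the form $\CZ_{\DD_1}\times(\text{reflecting cut of }\CZ_{\DD_2})$. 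These do not ``decouple away'': they are the entire content of the condition for cross-fiber pairs, and whether they can be assigned weights summing to $1$ with positive semi-definite $\CP^{(k)}$ is precisely the statement that $\CZ_{\DD_2}$ is \emph{vertex}-convex as a coset graph. That is a different condition from edge-convexity of $\CZ_{\DD_2}$ and has to be derived from it; this is the paper's Lemma \ref{edgevertex}, whose proof is not automatic --- for two vertices $u,v$ joined by an $A$-edge the incident $A$-edges coincide, so no cut separating $A$-edges separates $u$ from $v$, and one must manufacture the separating weight from the cut through the edge $uv$ itself, located by looking at a second label $B$. Consequently both hypotheses are needed for \emph{each} family of labels (edge-convexity of $\CZ_{\DD_1}$ for same-fiber pairs, edge-convexity of $\CZ_{\DD_2}$ upgraded to vertex-convexity for cross-fiber pairs), contrary to your closing claim that each hypothesis is consumed once, by its own label family.

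A secondary issue is that you treat edge-convexity as an analytic inequality to be ``applied to a configuration and summed over fibers.'' As defined in section \ref{edgeconvexdef} it is not an inequality about states but a combinatorial existence statement: one must exhibit matrices $\CM^{(k)}$, one per reflecting cut of the product graph, whose entries sum to $1$ over the cuts separating any given pair of edges and whose folded versions $\CP^{(k)}$ are positive semi-definite; there is no ``convex functional over fibers'' to sum. The correct completion of your argument is the paper's Lemma \ref{coset-general} applied with $H=G_1$: the coset graph is $\CZ_{\DD_2}$, the factor solution handles same-fiber pairs, the vertex-convexity solution handles cross-fiber pairs (the two cases are disjoint, so nothing is double-counted), and one then swaps the roles of the two factors to cover the labels of $\DD_2$.
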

\noindent 
This proposition follows from theorem I.6 in \cite{Gadde:2024jfi}. We offer an alternative proof of this proposition in section \ref{coset-graph}, based on lemma \ref{coset-general}. The proposition reduces the conjecture \ref{mainconj} to only Coxeter groups given by connected CD diagrams i.e. diagrams given in figure \ref{cd-diag}.  We make partial progress towards this by proving,
\begin{restatable}{thm}{partialproof}\label{partial-proof}
    $\CZ_{\DD}$ is edge-convex for $\DD=A_n, B_n (=C_n), D_n$.
\end{restatable}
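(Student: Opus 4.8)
The plan is to induct on the rank $n$, reducing $\CZ_\DD$ to a lower-rank $\psi$-graph through the parabolic coset decomposition of the underlying Coxeter group and then invoking lemma~\ref{coset-general}. Throughout I use the standard combinatorial models: type $A_n$ is the symmetric group $S_{n+1}$ generated by the adjacent transpositions $s_i=(i\ i{+}1)$, while types $B_n$ and $D_n$ sit inside the signed permutation group, $B_n$ being generated by a sign change together with the $s_i$ and $D_n$ by the $s_i$ together with a signed swap. In every case the braid relations $(s_is_j)^{m_{ij}}=1$ are even-length words, so $\CZ_\DD$ is bipartite and is a bona fide $\psi$-graph, as required.

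For the base cases of rank one and two, $\CZ_{A_1}$ is a single edge and the rank-two graphs $\CZ_{I_2(m)}$ are even cycles (Cayley graphs of dihedral groups), whose edge-convexity I would check by hand. For the inductive step I delete one node of the CD diagram to obtain a maximal parabolic subgroup $W_J$ with $J=S\setminus\{s\}$. The deleted node can always be chosen so that $W_J$ is again of type $A$, $B$, or $D$ of rank $n-1$ --- for instance $S_{n+1}\supset S_n$, $B_n\supset B_{n-1}$, and $D_n\supset D_{n-1}$ (or $D_n\supset A_{n-1}$) --- so that the fiber $\CZ_J:=\CZ_{\DD'}$ is edge-convex by the inductive hypothesis.

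The Cayley graph $\CZ_\DD$ then fibers over the coset space $W/W_J$: the cosets $wW_J$ partition the vertices, each coset carries a copy of $\CZ_J$ through the labels in $J$, and the edges with the single remaining label $s$ glue these fibers together according to the coset graph on the minimal-length representatives $W^J$. The length additivity $\ell(w)=\ell(w^J)+\ell(w_J)$ ensures this splitting is compatible with both the bipartition and the edge-labeling. Lemma~\ref{coset-general} is precisely the tool that assembles edge-convexity of the total graph from edge-convexity of the fiber plus a condition on the coset graph; I would apply it exactly as in the alternative proof of proposition~\ref{cartesian}, except that the fiber $\CZ_J$ is now furnished by induction rather than appearing as a tensor factor of a direct product.

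The main obstacle is that the coset graph $W/W_J$ is in general \emph{not} itself the Cayley graph of a Coxeter group, so the recursion cannot simply be iterated on it; one must instead verify the precise gluing hypothesis of lemma~\ref{coset-general} for the pattern in which the $s$-edges connect the fibers. This forces a type-by-type analysis: for $A_n$ the representatives $W^J$ form a single path and the fibers stack cleanly; for $B_n$ the sign-change generator creates a doubled, non-simply-laced gluing; and for $D_n$ the fork in the diagram makes the coset combinatorics the most delicate. Establishing that each of these three gluing patterns meets the coset-graph condition of lemma~\ref{coset-general} is the crux of the proof, and is presumably why the remaining connected diagrams $E_6,E_7,E_8,F_4,H_3,H_4$ resist this uniform argument and must be checked case by case.
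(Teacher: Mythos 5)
Your overall strategy --- induction on the rank, passing to a maximal parabolic subgroup $W_J$ whose Cayley graph is edge-convex by hypothesis, and invoking lemma~\ref{coset-general} --- is exactly the skeleton of the paper's proof. But the proposal stops precisely where the mathematical content begins: lemma~\ref{coset-general} requires the coset graph ${\rm Coset}(W/W_J, S\setminus K)$ to be \emph{vertex-convex}, and you never verify this for any of the three families; you only observe that doing so ``is the crux of the proof.'' The paper's proof consists almost entirely of that verification. It identifies each coset graph explicitly as a familiar polytope skeleton --- the $1$-skeleton of the $n$-simplex for $A_n/A_{n-1}$, the $n$-hypercube for $B_n/A_{n-1}$, the $n$-orthoplex for $B_n/B_{n-1}$, and the demi-hypercube for $D_n/A_{n-1}$ --- and then exhibits, for each pair of vertices, an extendible reflecting plane together with an explicit matrix $\CM^{(k)}$ (a single entry $1$ for the mirror pair, or the identity matrix for the hypercube) solving condition~\eqref{vertex-sol} with positive semi-definite $\CP^{(k)}$. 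Without this step the argument is a plan, not a proof. Incidentally, your structural guesses about these coset graphs are off: for $A_n$ the minimal coset representatives do not ``form a single path'' --- the coset graph is the complete graph $K_{n+1}$ --- which suggests the verification was not attempted.

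A second, smaller gap: lemma~\ref{coset-general} only delivers $A$-edge-convexity for labels $A\in K$, i.e.\ for the generators of the parabolic subgroup. A single choice of maximal parabolic therefore always leaves one edge label uncovered. The paper handles this by running the argument twice with two different parabolics in each type (two $A_{n-1}$'s inside $A_n$; both $A_{n-1}$ and $B_{n-1}$ inside $B_n$; two $A_{n-1}$'s inside $D_n$), which is why two distinct coset graphs must be shown vertex-convex in the $B_n$ case. Your proposal lists only one parabolic for $B_n$ (namely $B_{n-1}$) and treats the alternatives for $D_n$ as interchangeable, so as written it would not establish edge-convexity for all parties even granting the missing coset-graph verifications.
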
 
\noindent 
The proof is given in section \ref{connected-cd}, again using the lemma \ref{coset-general}.
The case of $\DD= I_n$ was already shown to be edge-convex in \cite{Gadde:2024jfi}. 
To prove the conjecture \ref{mainconj} completely, only  the edge-convexity of $\CZ_{\DD}$ where $\DD=E_{6,7,8}, F_4, H_{3,5}$ needs to be proven. We leave these six cases to future work.

A relatively simple condition on the graph called the ``edge-reflecting condition'' that is necessary for edge-convexity was identified in \cite{Gadde:2024jfi}. In this paper, we solve edge-reflecting graphs completely as Cayley graphs of Coxeter groups, using the result from \cite{MARC2017115}. This result is stated in theorem \ref{theorem2}. Theorem \ref{theorem2} is  crucial in allowing us to conjecture the classification of edge-convex graphs stated above. Conjecture \ref{mainconj} essentially states that edge-reflecting condition is not only necessary to edge-convexity but is in fact sufficient.

The rest of the paper is organized as follows. 
In section \ref{new}, we  develop a graph theoretic language to deal with local unitary invariant functions of the state and define the ``edge-convexity''  condition on the associated graph. A useful condition called ``edge-reflecting condition'' that is necessary for edge-convexity is formulated and solved. Then we present our main results viz. conjecture \ref{mainconj} and theorems \ref{cartesian} and \ref{partial-proof}.

\section{Multi-invariants and its properties}\label{new}
Let $|i_A\rangle, \, A=\1,\ldots, d_A$ be a basis for Hilbert space $H_A$. A state in the tensor product $\bigotimes_A H_A$ is written as 
\begin{align}
    |\psi\rangle = \sum \, \psi_{i_\1,\ldots, i_\tq}\, |i_\1\rangle \otimes \ldots\otimes |i_\tq\rangle.
\end{align}
The components $\psi_{i_\1,\ldots, i_\tq}$ is the wavefunction of the state $|\psi\rangle$ in the chosen basis. The index $i_A$ transforms in the fundamental representation of the unitary group acting on party $A$. The conjugate wavefunction is ${\bar\psi}^{i_\1,\ldots, i_\tq}$. Its indices transform in the anti-fundamental representation. 
Invariants of local unitary transformations are constructed by taking, say $n_r$  copies of $\psi$ and $n_r$ copies of $\bar \psi$ and contracting the fundamental indices of $\psi$'s with the anti-fundamental indices of $\bar \psi$'s as dictated by permutation elements of $S_{n_r}$ associated with each party. More explicitly,
\begin{align}\label{general-multi}
    \CZ(|\psi\rangle)= \langle\psi^{\otimes n_r}| (\sigma_A\otimes \sigma_B \ldots) |\psi^{\otimes n_r}\rangle
\end{align}
where $\sigma_A\in S_{n_r}$ is a permutation operator acting on $n_r$ copies of party $A$ and so on. The local unitary invariant thus defined is called multi-invariant \cite{Gadde:2024taa}. The connection between local unitary invariants and permutation group was made first in \cite{PhysRevA.58.1833}

It is convenient to use a graphical notation to describe multi-invariants. 
Let us denote a state $\psi_{i_\1,\ldots, i_\tq}$ (its complex conjugate $\bar\psi_{i_\1,\ldots, i_\tq}$) as a white (black) $\tq$-valent vertex. 
Each edge has a label of one of the $\tq$-parties. The edge corresponding to party $A$ is called an $A$-edge and so on. This notation is illustrated in figure \ref{psi-notation}. 
In the figures, we denote the edge label using a color that is not black or white. 
\begin{figure}[h]
    \begin{center}
        \includegraphics[scale=0.25]{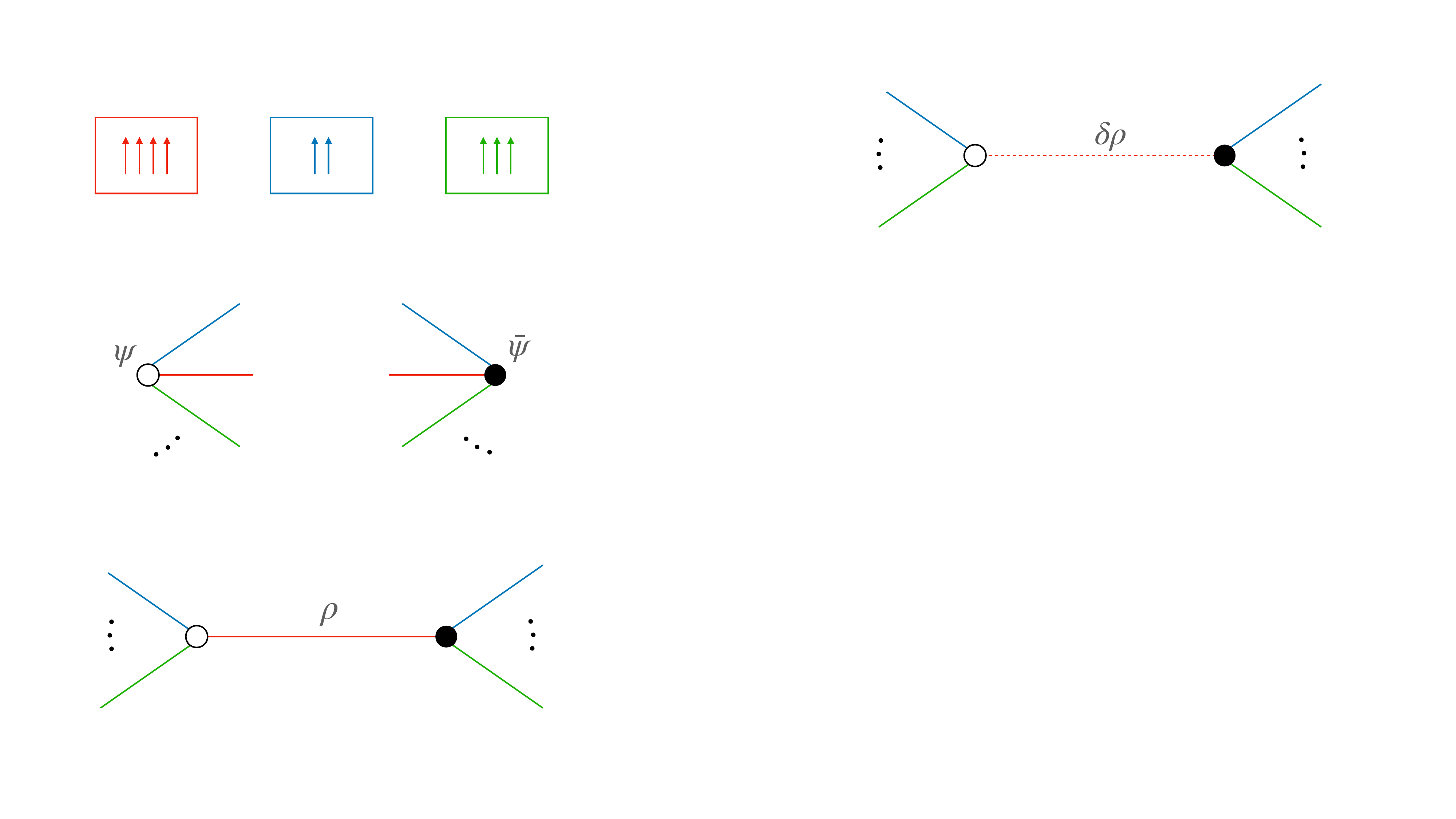}
    \end{center}
    \caption{White (black) vertex denoting $\psi$ ($\bar \psi$). The parties are labeled by colored edges. }\label{psi-notation}
\end{figure}
Whenever an index $i_A$ of a pair of $\psi$ and $\bar \psi$ is contracted, we connect the two corresponding vertex with $A$-edge and so on. If all the edges are contracted, the graph represents a local unitary invariant and if some edges are left unconnected then the open graph represents a tensor that transforms appropriately under local unitary transformations as indicated by the uncontracted indices. The multi-invariant in equation \eqref{general-multi} is obtained by connecting $A$-edge of $\alpha$-th white vertex to $(\sigma_{A}\cdot \alpha)$-th black vertex for all $A$. In this way, a multi-invariant is given by a bi-partite, $\tq$-color-regular graph. We call such a graphical representation of the multi-invariant a  $\psi$-\emph{graph}. Figure \ref{example} shows an example of a $\psi$-graph. 
\begin{figure}[h]
    \begin{center}
        \vspace{0.5cm}
        \includegraphics[scale=0.3]{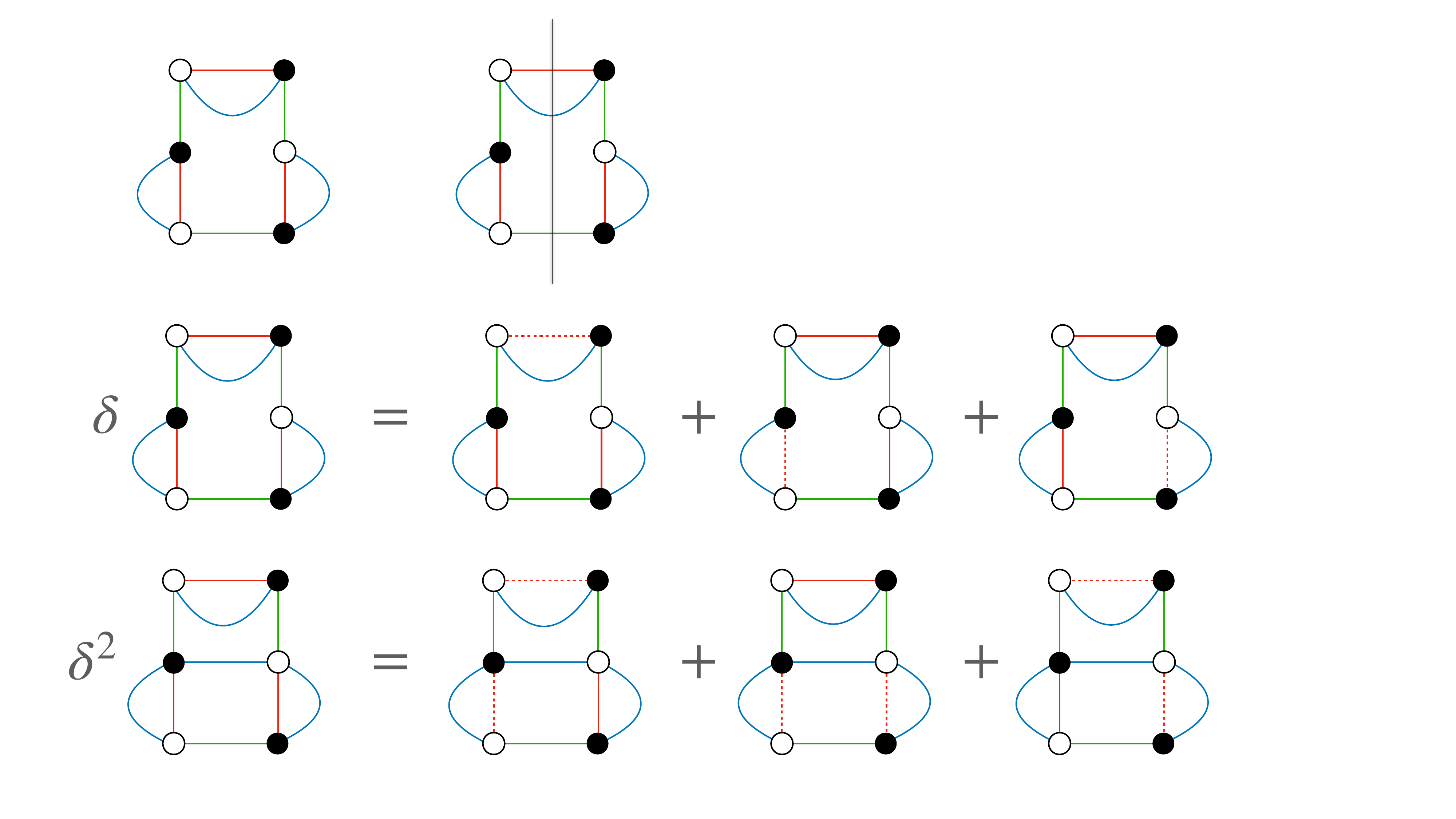}
    \end{center}
    \caption{Example of a $\psi$-graph constructed from three copies of $\psi$ and $\bar \psi$ each by connecting edges of identical colors.}\label{example}
\end{figure}
We use the calligraphic letter, such as $\CZ$, denoting the multi-invariant to also denote the associated $\psi$-graph as well. Multi-invariants obeys the factorization property
\begin{remark}\label{factor}
    \begin{align}
        \CZ(|\psi_1\rangle\otimes |\psi_2\rangle)= \CZ(|\psi_1\rangle)\CZ(|\psi_2\rangle).
    \end{align}
    Here $|\psi_1\rangle$ and $|\psi_2\rangle$ are $\tq$-partite states and their tensor product is also thought of as a $\tq$-partite state with  party $A$ of $|\psi_1\rangle\otimes |\psi_2\rangle$ being the tensor product of  party $A$ in $|\psi_1\rangle$ and  party $A$ in $|\psi_2\rangle$.
\end{remark} 
\begin{remark}\label{smaller}
    For a normalized state $|\psi\rangle$,
    \begin{align}
        |\CZ(|\psi\rangle)|\leq 1,
    \end{align} 
    with equality holding for fully factorized state. 
\end{remark}
\noindent 
This is because the tensor product of permutation operators $\sigma_A\otimes \sigma_B\ldots$ is a unitary operator. Then  remark \ref{smaller} holds due to Cauchy-Schwarz inequality. 

For future convenience, it is useful to define normalized multi-invariant $\hat \CZ:=\CZ^{1/n_r}$ where $n_r$ is  the number of white (or black) vertices in the graph $\CZ$. Obviously remarks \ref{factor} and \ref{smaller} are also valid for the normalized multi-invariants.  

In what follows, we will discuss a special class of multi-invariants called symmetric multi-invariants. We will relate their $\psi$-graphs to certain types of Cayley graphs. For this purpose, it is convenient to think of the un-oriented colored edges as a pair of oppositely oriented edges.

\subsection{Symmetric multi-invariants}\label{sym-multi}
Let us discuss a special class of multi-invariants called \emph{symmetric multi-invariants}, that will be useful later on. 
It is often useful to consider the automorphism group of $\psi$-graphs. 
An even (odd) isomorphism of $\psi$-graphs is defined as the graph isomorphism of the underlying graphs that preserves edge-labels and preserves (flips) vertex color. If $\psi$-graph isomorphism is not specified to be either even or odd then it is taken to be either.  The group of even automorphisms of a $\psi$-graph is called the replica symmetry $\CR$ and the group of automorphisms is called the extended replica symmetry $\hat \CR$. 


\begin{remark}\label{vertex-free}
    An automorphism of a connected $\psi$-graph that fixes a vertex must be identity.
\end{remark}
\noindent 
This is because if a vertex $v$ is fixed by an automorphism then all of its neighbors must also be fixed because they are connected to $v$ by edges of different labels and an automorphism preserves the edge-labels. In the same way, we can now argue that neighbors' neighbors must also be fixed and so on.
Similarly, 
\begin{remark}
    An even automorphism of a connected $\psi$-graph that fixes an edge must be identity.
\end{remark}
\begin{remark}
    An odd automorphism of a connected $\psi$-graph that fixes an edge must be an involution.
\end{remark}
\noindent 
An odd automorphism fixing a given edge must map its endpoints to each other. The square of the automorphism fixes both of the endpoints hence must be identity due to remark \ref{vertex-free}.

\begin{definition}[Vertex transitivity]
    If the replica symmetry acts on the set of white (or equivalently, black) vertices transitively then the $\psi$-graph is called weakly-vertex-transitive.\\ 
    If the extended replica symmetry acts on the set of all vertices  transitively then the $\psi$-graph  is called vertex-transitive. \footnote{Closely related notions, replica symmetric and extended replica symmetric multi-invariants are defined in \cite{Gadde:2024taa}. They are particularly relevant for states in holographic conformal field theories. If the replica symmetry acts on white (or equivalently, black) vertices freely and transitively then the $\psi$-graph is called replica symmetric. If the extended replica symmetry acts on the set of all vertices  freely and transitively then the $\psi$-graph  is called extended replica symmetric.}
\end{definition}
\noindent
\begin{definition}[Edge transitivity]
    A $\psi$-graph is called $A$-edge-transitive if the extended replica symmetry group acts transitively on $A$-edges. If it is $A$-transitive for all $A$ then it is called all-edge-transitive. \\
    A $\psi$-graph is called strongly-$A$-edge-transitive if the  replica symmetry group acts transitively on $A$-edges. If it is strongly-$A$-transitive for all $A$ then it is called strongly-all-edge-transitive. 
\end{definition}
Now we would like to relate these various notions to each other and also to Cayley graphs. 
\begin{restatable}{thm}{theoremfirst}\label{theorem1}
    The following statements are equivalent: 
    \begin{enumerate}
        \item The $\psi$-graph is vertex-transitive.
        \item The $\psi$-graph is weakly-vertex-transitive.
        \item The $\psi$-graph is strongly-all-edge-transitive.
        \item The $\psi$-graph is all-edge-transitive.
        \item The $\psi$-graph is ${\rm Cay}(\hat \CR, S)$ with elements of $S$ being involutive generators.
    \end{enumerate}
\end{restatable}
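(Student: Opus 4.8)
The plan is to prove all five statements equivalent by first disposing of the equivalence $(1)\Leftrightarrow(5)$, which is essentially definitional, and then closing the cycle of transitivity conditions $(1)\Rightarrow(2)\Rightarrow(3)\Rightarrow(4)\Rightarrow(1)$. For $(1)\Leftrightarrow(5)$, recall that a connected graph is a Cayley graph of a group $H$ precisely when $H$ acts on the vertices by edge-label-preserving automorphisms simply transitively. By remark \ref{vertex-free} the action of $\hat\CR$ on the vertices of a connected $\psi$-graph is automatically free, so $\hat\CR$ acts simply transitively if and only if it acts transitively, i.e. if and only if the graph is vertex-transitive. Identifying a base (white) vertex with the identity then realizes $\CZ$ as ${\rm Cay}(\hat\CR,S)$ with $S=\{s_A\}$, where $s_A\in\hat\CR$ is the unique element carrying the base vertex to its $A$-neighbor. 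Finally, since the $A$-edges form an undirected perfect matching, $s_A^2$ fixes the base vertex, so $s_A$ is an involution by remark \ref{vertex-free}, and connectivity forces $S$ to generate $\hat\CR$. Conversely ${\rm Cay}(\hat\CR,S)$ carries the transitive regular left-multiplication action, giving $(5)\Rightarrow(1)$.

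The three forward arrows in the cycle are short. For $(1)\Rightarrow(2)$ I would use that a connected bipartite $\psi$-graph admits only even or odd automorphisms; an element of $\hat\CR$ taking one white vertex to another cannot swap the color classes, hence lies in $\CR$, so transitivity of $\hat\CR$ on all vertices restricts to transitivity of $\CR$ on the white vertices (the same argument run through the unique $A$-edge at each vertex gives the black version, which is why the two formulations in the definition coincide). For $(2)\Rightarrow(3)$ I would invoke color-regularity directly: each vertex is incident to a single $A$-edge, so an element of $\CR$ sending the white endpoint of one $A$-edge to the white endpoint of another must send the first $A$-edge to the second; thus transitivity of $\CR$ on white vertices promotes to strong transitivity on $A$-edges for every $A$. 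The arrow $(3)\Rightarrow(4)$ is immediate from $\CR\subseteq\hat\CR$.

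The substance of the theorem is therefore the closing arrow $(4)\Rightarrow(1)$, and this is the step I expect to be the main obstacle. Given two vertices $u$ and $v$, I would take the unique $A$-edges $e_u\ni u$ and $e_v\ni v$ and use $A$-edge-transitivity to find $h\in\hat\CR$ with $h(e_u)=e_v$; then $h(u)$ is one of the two endpoints of $e_v$. If $h(u)=v$ we are done, but in general $h(u)$ is the \emph{other} endpoint, and to finish one needs an automorphism reversing $e_v$, i.e. exchanging its white and black endpoints. Producing such a color-\emph{flipping} automorphism is the genuine difficulty, since it is the only point where one must cross between the $\psi$- and $\bar\psi$-vertices rather than permute within a single color class, and it is exactly the feature that distinguishes these graphs from generic edge-transitive bipartite graphs. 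The natural route I would attempt is to read $A$-edge-transitivity on the oriented edges (each undirected $A$-edge viewed as a pair of oppositely oriented ones, as set up just before this subsection), so that transitivity furnishes an automorphism reversing a chosen $A$-edge; such an automorphism fixes that edge setwise and so, by the remark that an odd automorphism fixing an edge is an involution, is an odd involution. The crux is precisely whether the edge-transitivity hypothesis is strong enough to supply this reversal; once the edge-reversing involution is available, post-composing with $h$ lands $u$ on $v$, establishing vertex-transitivity, and the remaining identification of $\CZ$ as ${\rm Cay}(\hat\CR,S)$ is the routine bookkeeping already carried out for $(1)\Leftrightarrow(5)$.
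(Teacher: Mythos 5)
Your treatment of $(1)\Leftrightarrow(5)$, $(1)\Rightarrow(2)$, $(2)\Rightarrow(3)$ and $(3)\Rightarrow(4)$ is sound and matches the paper's route (the paper likewise leans on Sabidussi's theorem, extended to labelled graphs via proposition 9 of \cite{caucal2016structural}, together with remark \ref{vertex-free} for freeness of the $\hat\CR$-action). However, there is a genuine gap at the step you yourself flag as the crux, $(4)\Rightarrow(1)$. You correctly reduce the problem to producing an odd automorphism reversing a given $A$-edge, but your proposed fix --- ``read $A$-edge-transitivity on the oriented edges'' --- is not available: the hypothesis as defined is transitivity of $\hat\CR$ on \emph{unoriented} $A$-edges, and transitivity on oriented edges is exactly equivalent to the existence of the edge-reversing automorphism you are trying to construct. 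So as written the argument is circular at the decisive point, and you acknowledge as much without closing it.

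The paper closes this gap by contradiction rather than by exhibiting the reversal directly. Assuming no automorphism carries $\tilde v_1$ to $v_1$, edge-transitivity forces every automorphism matching corresponding $B$-edges at an $A$-adjacent pair $v_1,v_2$ to ``cross over'' (send $v_1$ to the far endpoint), which manufactures an $ABAB$ $4$-cycle; since $A,B$ were arbitrary, every alternating two-label cycle has length $4$. An induction on the number of labels then shows a connected $\psi$-graph with this alternating-$4$-cycle property is a hypercube, which is vertex-transitive --- contradicting the assumption. If you want to complete your proof along your own lines, you would need an independent argument that an all-edge-transitive $\psi$-graph admits an edge-reversing (odd) automorphism; absent that, the paper's contradiction-plus-hypercube-classification argument is the missing ingredient.
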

\noindent
Cayley graph ${\rm Cay}(G,S)$ is defined in appendix \ref{cayley}. 
The proof of this theorem is given in appendix \ref{proof-theorem1}. Theorem 19 in \cite{caucal2016structural} proves the equivalence of $5$ with the rest. 
This theorem characterizes symmetric multi-invariants completely. These multi-invariants play an important role in the paper.
We will denote $\psi$-graph that is ${\rm Cay}(G,S)$ and the associated multi-invariant as $\CZ(G,S)$. 

\subsection{Reflection symmetry, positivity and edge-convexity}\label{edgeconvexdef}
In this section we will discuss the properties of $\psi$-graph that is reflection symmetric.  
Let us first recall that a graph cut of a connected graph is a subset of edges after removing which, the graph becomes disconnected. From now on, without loss of generality, we will assume $\CZ$ is connected. The reason for this is explained below theorem \ref{theorem-psem}. 
\begin{definition}
    An edge subset $E$ of a $\psi$-graph $\CZ$ is called a reflecting cut if there exists an odd automorphism $k$ such that
    \begin{enumerate}
        \item For every edge $uv\in E$, $k(v)=u$ and $k(u)=v$.
        \item $\CZ-E$ consists of two components $\CT_1$ and $\CT_2$ that are disconnected from each other and are mapped to each other by $k$.
    \end{enumerate}
\end{definition}
\noindent 
See  section \ref{sym-multi} for the definition of odd automorphism. 
The automorphism $k$ is odd and fixes every edge in $E$ so it is an involution. The $\psi$-graph in figure \ref{example} admits a reflecting cut. It is shown in figure \ref{example-cut}.
\begin{figure}[h]
    \begin{center}
        \includegraphics[scale=0.3]{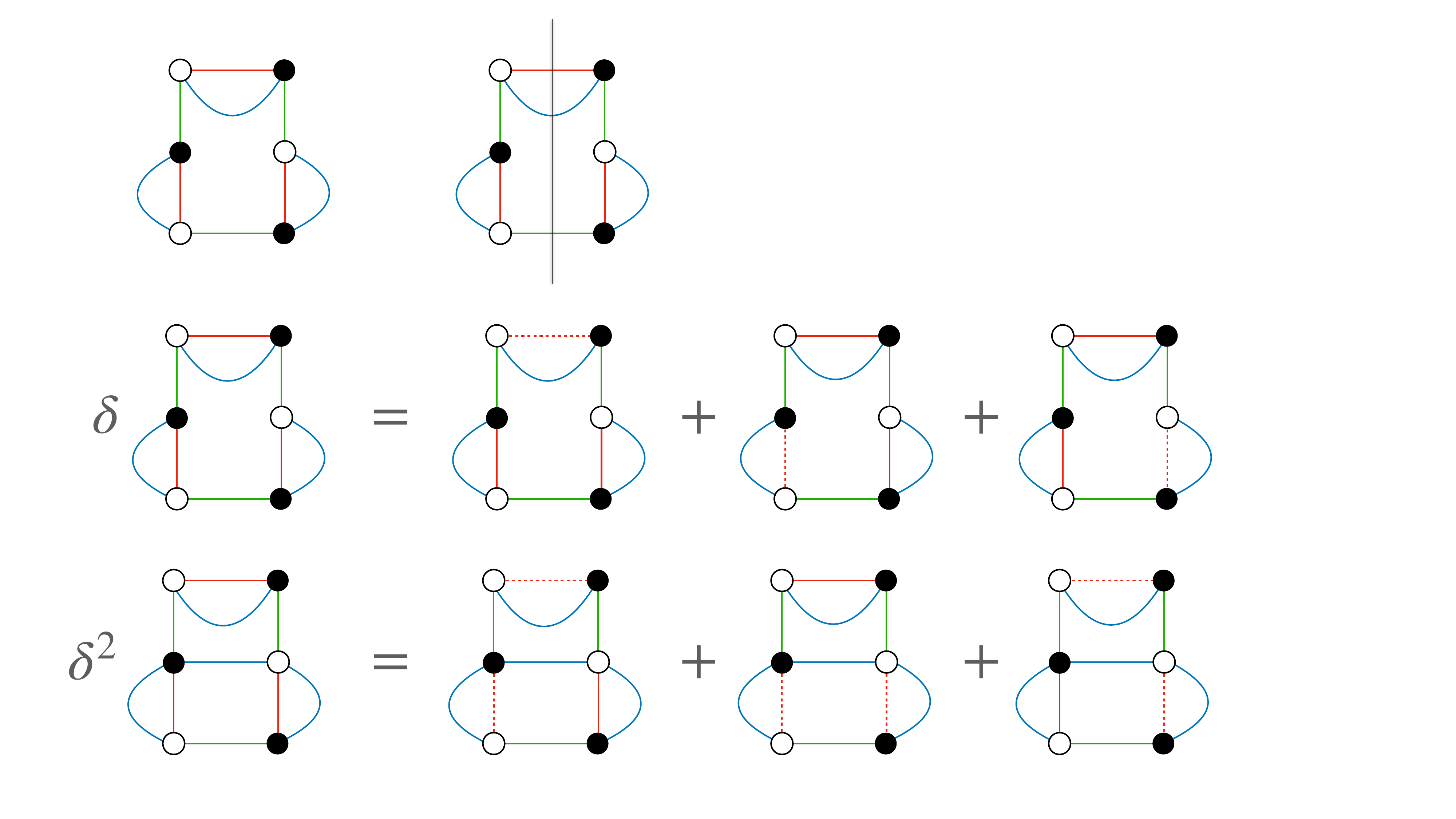}
    \end{center}
    \caption{The reflecting cut is shown by a straight black line passing through the graph. It is easy to see that the graph is symmetric under the reflection across the reflecting cut, after vertex color flip.}\label{example-cut}
\end{figure} 

\begin{remark}\label{reflecting-cut}
    If a $\psi$-graph $\CZ$ admits a reflecting cut then $
    \CZ$ is positive. 
\end{remark}
\noindent This can be seen as follows. Consider the two graphs obtained after the reflecting cut. Restoring the cut edges on each of them separately but not joining them gives us a pair of open graphs that represents tensors $|\CT_1\rangle$ and $| \CT_2\rangle$  that are complex conjugates of each other i.e. $| \CT_2\rangle=| {\bar \CT}_1\rangle$. The original $\psi$-graph $\CZ$ is obtained by connecting the  indices that are images of each other. This gives the presentation of $\CZ$ as the squared norm of $|\CT_1\rangle$. Hence $\CZ$ is positive.
\begin{remark}
    No two reflecting cuts share an edge. 
\end{remark}
\noindent
If they do share an edge $uv$ then the associated automorphisms $k_1$ and $k_2$ have the property that $k_1(u,v)=(v,u), k_2(u,v)=(v,u)$. Then $k_1\cdot k_2(u)=u$. Due to remark \ref{vertex-free}, $k_1\cdot k_2$ must be identity and because both are involutions, $k_1=k_2$.

\begin{definition}
    A $\psi$-graph is called $A$-edge-convex if it admits solution to condition:
    \begin{align}\label{sol-convex}
        \sum_{k\, {\rm s.t.}\,e\in R_k, e'\in L_k} \,\,\CM_{e,e'}^{(k)}=1. \qquad \forall\,  e,e'.
    \end{align}
    where the sum is over all  reflecting cuts that separate the $A$-edges $e$ and $e'$. The sets $R_k$ and $L_k$ are the edge-sets that are on the right side and left side of the reflecting cut respectively. 
    The matrix $\CP^{(k)}(e,e')= \CM^{(k)}_{e, k(e')}$ defined for $e,e'\in R_k$ is positive semi-definite for all $k$'s in the sum. \\
    If it is $A$-edge-convex for all $A$ then it is called edge-convex.
\end{definition}
A powerful theorem was proved in \cite{Gadde:2024jfi},
\theorempsem*
\noindent 
Now we extract a simple and necessary condition for a $\psi$-graph to be edge-convex.
\begin{definition}
    A $\psi$-graph is called $A$-edge-reflecting if it admits a reflecting cut that separates any pair of $A$-edges $(e,e')$. If it is $A$-edge-reflecting for all $A$ then it is called edge-reflecting. 
\end{definition}
\begin{remark}\label{obvious1}
    If a $\psi$-graph is A-edge-convex then it is $A$-edge-reflecting.
\end{remark}
\noindent
This is because, for a given pair of vertices $e,e'$, there needs to be at least one reflecting cut separating them so that it has a chance of appearing on the left-hand side sum in equation \eqref{sol-convex} which is the defining  condition for edge-convexity. 
One way to make progress towards solving the edge-convexity condition is to first solve the edge-reflecting condition.
\begin{remark}\label{transitive}
    If a connected $\psi$-graph is  A-edge-reflecting then it is $A$-edge-transitive.
\end{remark}
\noindent
Let $e_1$ and $e_2$ be a pair of $A$-edges. Find a path between $e_1$ and $e_2$ that has alternating $A$-edges. Reflecting cuts containing a non-A-edge in this path maps consecutive $A$-edges to each other. Sequence of such reflecting cuts yields an automorphism that maps $e_1$ to $e_2$. 

\section{Classification}\label{solve-reflect}
In this section we  conjecture a complete classification of edge-convex $\psi$-graphs. We are greatly aided by the theorem $2.8$ of \cite{MARC2017115} which classifies the edge-reflecting graphs. 
\begin{definition}
    A $\psi$-graph is called a mirror $\psi$-graph if each edge is part of \emph{some} reflecting cut. 
\end{definition}
\begin{restatable}{thm}{theoremsecond}\label{theorem2}
    The following statements are equivalent:
    \begin{enumerate}
        \item A $\psi$-graph is edge-reflecting.
        \item A $\psi$-graph is a mirror $\psi$-graph.
        \item The $\psi$-graph is a Cayley graph of a finite Coxeter group (with standard involutive generators). 
    \end{enumerate}
\end{restatable}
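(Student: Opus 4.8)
The plan is to establish the three-way equivalence by proving a cycle of implications, with the main conceptual work residing in the bridge between the purely graph-theoretic ``mirror'' condition and the algebraic characterization via Coxeter groups, where we lean on theorem 2.8 of \cite{MARC2017115}.

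\paragraph{Overview of the cycle.}
I would prove $(1)\Rightarrow(2)\Rightarrow(3)\Rightarrow(1)$. The implication $(1)\Rightarrow(2)$ is the short direction and should be handled first: edge-reflecting asks that every \emph{pair} of same-color edges be separated by some reflecting cut, while mirror asks merely that every \emph{single} edge lie on some reflecting cut. Given $(1)$, fix any edge $e$ and pick any other $A$-edge $e'$ of the same color; edge-reflecting supplies a reflecting cut separating them, and by definition a reflecting cut consists of edges fixed (endpoint-swapped) by the associated odd involution $k$. The point to nail down is that $e$ itself belongs to a reflecting cut: one argues that the cut separating $e$ from its neighbor along an alternating-$A$-edge path (as in remark \ref{transitive}) must actually contain $e$, since a cut separating the two endpoints of an edge must sever that edge. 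I expect this step to reduce to a clean combinatorial lemma about alternating paths.

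\paragraph{The algebraic bridge.}
The substantive direction is $(2)\Leftrightarrow(3)$, and here the strategy is to transport the statement into the language of \cite{MARC2017115}. First I would verify that a $\psi$-graph, viewed as a $\tq$-color-regular bipartite graph with the oppositely-oriented-edge convention of section \ref{new}, is precisely the kind of edge-colored graph to which theorem 2.8 applies; the identification of ``reflecting cut'' with the ``mirror''/separating-reflection notion of that paper is the crux. A reflecting cut is governed by an odd involutive automorphism $k$ that swaps the two sides $\CT_1,\CT_2$; by remark \ref{reflecting-cut} these are the reflections, and the mirror condition that every edge lies on some such cut is exactly the hypothesis of the cited classification. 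Invoking theorem 2.8 then yields that the graph is a Cayley graph of a group generated by involutions whose mirror structure forces the defining relations to be those of a Coxeter system; conversely, in $\mathrm{Cay}(W,S)$ for a finite Coxeter group $W$ with standard generators $S$, each generator $s$ determines the reflection fixing the wall of cosets and the corresponding set of $s$-edges is a reflecting cut through every $s$-edge, giving $(3)\Rightarrow(2)$. Finiteness of $W$ corresponds to finiteness of the $\psi$-graph.

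\paragraph{Closing the cycle and the main obstacle.}
Finally $(3)\Rightarrow(1)$ can be obtained either directly or for free by composing $(3)\Rightarrow(2)$ with $(2)\Rightarrow(1)$; for the latter I would show a mirror $\psi$-graph is edge-reflecting by using the reflections through individual edges together with the edge-transitivity machinery of remark \ref{transitive} to produce, for any pair of same-color edges, a reflecting cut separating them. The main obstacle I anticipate is the precise dictionary between our $\psi$-graph vocabulary (odd automorphisms, reflecting cuts, the sides $R_k,L_k$) and the hypotheses and conclusion of theorem 2.8 of \cite{MARC2017115}: one must check that their notion of a separating reflection coincides edge-for-edge with our reflecting cut, that ``standard involutive generators'' on our side matches their generating set, and that no finiteness or connectedness hypothesis is lost in translation. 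Once that identification is airtight, the classification is essentially imported wholesale; everything else is the comparatively routine propagation of the fixed-vertex rigidity from remark \ref{vertex-free}.
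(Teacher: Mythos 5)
Your overall architecture matches the paper's: the equivalence of (2) and (3) is imported from theorem 2.8 of \cite{MARC2017115} in both cases, and the real work is the combinatorial equivalence of (1) and (2). But your treatment of the direction mirror $\Rightarrow$ edge-reflecting has a genuine gap. You propose to ``use the reflections through individual edges together with the edge-transitivity machinery of remark \ref{transitive}'' to separate an arbitrary pair of same-color edges. First, this is circular: remark \ref{transitive} \emph{assumes} the graph is $A$-edge-reflecting, which is exactly what you are trying to prove. Second, and more substantively, knowing that every edge lies on \emph{some} reflecting cut does not tell you that any given pair of $A$-edges $e_1,e_2$ is \emph{separated} by one: a cut through an edge of a path joining them could cross that path an even number of times (leaving $e_1$ and $e_2$ on the same side), or could itself contain $e_1$ or $e_2$. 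The paper closes this with a dedicated lemma (lemma \ref{geodesic}: a reflecting cut meets a geodesic in at most one edge, proved by a shortening argument) together with a bipartiteness argument showing that $e_1-p$ and $p-e_2$ remain geodesics, so that the cut through an interior edge of the geodesic $p$ crosses $p$ exactly once and avoids $e_1,e_2$. Some substitute for this geodesic argument is indispensable; without it the implication does not go through.

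Your sketch of $(1)\Rightarrow(2)$ is also imprecise, though recoverably so. A cut separating $e$ from another $A$-edge $e'$ leaves both endpoints of $e$ on the same side, so it cannot contain $e$. To exhibit a cut through $e=(u,v)$ you must separate $u$ from $v$, and since they share the $A$-edge $e$ this forces you to use a \emph{different} color: the $B$-edges incident on $u$ and $v$ are distinct (barring the degenerate two-vertex component, which the paper treats separately), and the cut separating those $B$-edges separates $u$ from $v$ and hence severs $e$. Your closing observation that ``a cut separating the two endpoints of an edge must sever that edge'' is the right final step, but the cut you feed into it is not one that separates those endpoints.
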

\noindent
The proof  is given in  appendix \ref{proof-theorem1}.  Theorem \ref{theorem2} is a very explicit characterization of edge-reflecting graphs. 
We now make our main conjecture: 
\mainconj*
\noindent 
Alternatively, we conjecture the converse of remark \ref{obvious1}. 
This conjecture completely characterizes edge-convex $\psi$-graphs. The ``only if'' part of the conjecture follows from remark \ref{obvious1} and theorem \ref{theorem2}. Now we will partially prove the ``if'' part of the theorem.

Finite Coxeter groups and their Cayley graphs are reviewed in appendix \ref{cayley}. They are labeled by Coxeter-Dynkin (CD diagram) diagrams which are disconnected sums of the diagrams listed in figure \ref{cd-diag}. We denote the $\psi$-graph and the associated multi-invariant to a finite Coxeter group $G$ as $\CZ_{\DD}$ where $\DD$ is the CD diagram of $G$. It is the Cayley graph of $G$ with standard involutive generators. 
The edge-convex graphs that were found in \cite{Gadde:2024jfi} viz. $ \CE^{(2)},  \CE^{(3)}$ and $\CC_n$ are indeed consistent with conjecture \ref{mainconj} because they are $\CZ_{A_1 \sqcup A_1}, \CZ_{A_1 \sqcup A_1 \sqcup A_1}$ and $\CZ_{ I_{n}} $.

To make progress towards proving conjecture \ref{mainconj},  we will first show that if $\CZ_{\DD_1}$ and $\CZ_{\DD_2}$ are edge-convex then $\CZ_{\DD_1\sqcup \DD_2}$ is also edge-convex. Then proof of the conjecture \ref{mainconj} reduces to the proof of the  edge-convexity of the $\CZ_\DD$ where $\DD$ is a connected CD diagram listed in figure \ref{cd-diag}. Then we will prove edge-convexity of  $\CZ_{A_n}$, $\CZ_{B_n}$ and $\CZ_{D_n}$ and leave the remaining six   cases viz. edge-convexity of $\CZ_{\DD}$ for $\DD=E_{6,7,8}, F_4, H_{3,5}$ to future work. Note that the $\CZ_{I_n}$ was denoted as  $\CC_n$ in \cite{Gadde:2024jfi} and was  already shown to be edge-convex.

To carry out this program we need to develop some more tools viz.  the Schreier coset graph and a condition analogous to edge-convexity but for vertices of such a coset graph.

\subsection{Coset graph and vertex convexity}\label{coset-graph}
See appendix \ref{cayley} for the definition of the coset graph ${\rm Coset}(G/H, S\setminus K)$. We will define a new property called ``vertex-convexity'' for coset graphs.

Given ${\rm Cay}(G,S)$, we can erase edges corresponding to generators in $S\setminus K$ to obtain disconnected graph whose connected components are isomorphic to ${\rm Cay}{(H,K)}$. Each connected component corresponds to a vertex of the coset graph. First note that if we consider a reflecting cut of any of the connected components, ${\rm Cay}{(H,K)}$, it extends to the reflecting cut of ${\rm Cay}{(G,S)}$ uniquely. Let us see how this cut looks for the coset graph. Clearly, such a cut passes through at least one of the vertices of the coset graph viz. the one corresponding to the connected component whose reflecting cut we started off with. It may pass through other vertices as well. It is a reflecting cut of the coset graph. It is different from the usual reflecting cut in that it is allowed to ``pass through'' its vertices of the coset graph. To emphasize this distinction, we  call it a reflecting plane.  
It is also possible to have a reflecting plane of the coset graph that does not pass through any of the vertices.  We will use reflecting planes to define vertex convexity of the coset graph. 
\begin{definition}
    A coset graph is called vertex-convex if it admits solution to condition:
    \begin{align}\label{vertex-sol}
        \sum_{k\, {\rm s.t.}\,v\in R_k, v'\in L_k} \,\,\CM_{v,v'}^{(k)}=1. \qquad \forall\,  v,v'.
    \end{align}
    where the sum is over all extendible reflecting planes i.e. over reflecting planes  that extend to a reflecting cut of the original Cayley graph and separate vertices $v$ and $v'$ of the coset graph. The matrix $\CP^{(k)}(v,v')= \CM^{(k)}_{v, k(v')}$ is positive semi-definite for all $k$'s in the sum.
\end{definition}
\noindent
Notice that this definition parallels definition \eqref{sol-convex} of edge-convex graphs. Vertex-convexity was defined for $\psi$-graphs in \cite{Gadde:2024jfi} but here it is useful to extend its definition to coset graph. 

We use the vertex-convexity of the coset graphs towards proving the edge-convexity of the original Cayley graph using the following lemma.
\begin{restatable}{lem}{cosetlemma}\label{coset-general}
    If the $\psi$-graph ${\rm Cay}(H,K)$ is edge-convex and the coset graph ${\rm Coset}(G/H, S\setminus K)$ is vertex-convex then ${\rm Cay}(G,S)$ is $A$-edge-convex for  $A\in K$.
\end{restatable}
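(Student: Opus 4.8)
The plan is to lift the edge-convexity data of the fiber $\mathrm{Cay}(H,K)$ and the vertex-convexity data of the base $\mathrm{Coset}(G/H,S\setminus K)$ to produce an explicit solution of the edge-convexity condition \eqref{sol-convex} for $\mathrm{Cay}(G,S)$ restricted to $A$-edges with $A\in K$. The key structural fact to exploit is the fibration: erasing the $S\setminus K$ generators decomposes $\mathrm{Cay}(G,S)$ into disjoint copies of $\mathrm{Cay}(H,K)$, one per coset, and every $A$-edge with $A\in K$ lives inside exactly one such fiber. So given any pair of $A$-edges $(e,e')$ in $\mathrm{Cay}(G,S)$, there are two cases: either they sit in the \emph{same} fiber, or in \emph{different} fibers. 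These two cases will be handled by the edge-convexity of $\mathrm{Cay}(H,K)$ and the vertex-convexity of the coset graph, respectively.

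First I would set up the correspondence between reflecting cuts of $\mathrm{Cay}(G,S)$ and the two pieces of data. As the text already notes, a reflecting cut of a single fiber $\mathrm{Cay}(H,K)$ extends uniquely to a reflecting cut of $\mathrm{Cay}(G,S)$; conversely a reflecting cut of $\mathrm{Cay}(G,S)$ induces a reflecting plane on the coset graph, and it is \emph{extendible} precisely when it arises this way. I would classify the reflecting cuts separating two $A$-edges accordingly: cuts that separate $e,e'$ lying in the same fiber correspond (via the uniqueness of extension) to reflecting cuts of that fiber separating the two edges, while cuts separating edges in distinct fibers correspond to extendible reflecting planes separating the two vertices $v,v'$ of the coset graph that label those fibers. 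For the \textbf{same-fiber case}, I would define the matrix $\CM^{(k)}$ for $\mathrm{Cay}(G,S)$ by pulling back the solution $\CM^{(k)}_{\text{fiber}}$ guaranteed by edge-convexity of $\mathrm{Cay}(H,K)$, checking that the sum condition and the positive-semidefiniteness of $\CP^{(k)}$ transport under the unique extension. For the \textbf{different-fiber case}, I would set $\CM^{(k)}_{e,e'} := \CM^{(k)}_{v,v'}$ using the vertex-convexity solution of the coset graph, where $v,v'$ are the cosets containing $e,e'$; again the defining sum \eqref{vertex-sol} and positivity of the associated $\CP^{(k)}$ carry over to \eqref{sol-convex}.

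The main obstacle I anticipate is the \emph{bookkeeping at the interface} of the two cases, and in particular verifying that the two families of reflecting cuts (fiber-cuts versus plane-cuts passing through multiple vertices) do not double-count or leave gaps. A reflecting plane of the coset graph may pass \emph{through} a vertex (i.e. through the interior of a fiber) while still separating two other vertices, and I must confirm that such a cut genuinely separates the relevant $A$-edges in $\mathrm{Cay}(G,S)$ and is counted once with the correct $\CM$-weight. Relatedly, I need to be sure that an extendible reflecting plane separating $v$ from $v'$ does not also cut $A$-edges inside the fibers it passes through in a way that interferes with the same-fiber accounting. The resolution should come from the observation that an $A$-edge (with $A\in K$) is never itself severed by an $S\setminus K$-plane unless the plane enters its fiber, together with the uniqueness statement for extending fiber-cuts; once that is pinned down, the two weight assignments decouple cleanly and the sums combine to give the required value $1$ for every pair $(e,e')$.

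Finally I would assemble the overall $\CM^{(k)}$ for $\mathrm{Cay}(G,S)$ by taking the direct-sum (block) combination of the same-fiber and different-fiber contributions, verifying that for each fixed reflecting cut $k$ the full matrix $\CP^{(k)}$ remains positive semi-definite—this follows because it is block-diagonal with blocks coming from the already-positive fiber and coset data—and that for every pair of $A$-edges $(e,e')$ the total sum over separating cuts equals $1$, exactly as required by \eqref{sol-convex}. This establishes $A$-edge-convexity of $\mathrm{Cay}(G,S)$ for all $A\in K$, which is the claim of the lemma.
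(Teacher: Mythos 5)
Your proposal matches the paper's own proof: both decompose ${\rm Cay}(G,S)$ into ${\rm Cay}(H,K)$-fibers over the coset graph, assign $\CM^{(k)}(e,e')$ from the fiber's edge-convexity solution when $e,e'$ lie in the same fiber and from the coset graph's vertex-convexity solution (indexed by the projected reflecting plane) when they lie in different fibers, and invoke the unique extension of fiber reflecting cuts to cuts of the full graph. The interface bookkeeping you flag as the main obstacle is indeed the delicate point, and the paper's proof passes over it just as tersely as you do.
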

\noindent 
The proof is stated in appendix \ref{proof-theorem1}. 

Note that ${\rm Cay}(G,S)={\rm Coset}(G/\{e\}, S)$. We prove the following lemma  about vertex-convexity of such graphs in appendix \ref{proof-theorem1}. 
\begin{restatable}{lem}{edgeverte}\label{edgevertex}
    If the $\psi$-graph ${\rm Cay}(G,S)$ is edge-convex then ${\rm Coset}(G/\{e\}, S)$ is vertex-convex.
\end{restatable}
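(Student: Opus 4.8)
\textbf{Proof plan for Lemma \ref{edgevertex}.}

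The plan is to show that the trivial-subgroup coset graph ${\rm Coset}(G/\{e\}, S)$ is essentially the Cayley graph itself, so that the vertex-convexity condition \eqref{vertex-sol} reduces to an instance of the edge-convexity condition \eqref{sol-convex}. First I would make precise the identification between the two objects. When $H=\{e\}$, erasing the edges labeled by $S\setminus K = S$ from ${\rm Cay}(G,S)$ leaves only isolated vertices, so each connected component is a single group element and the vertices of ${\rm Coset}(G/\{e\}, S)$ are in natural bijection with the vertices of ${\rm Cay}(G,S)$. Under this bijection the reflecting planes of the coset graph that are ``extendible'' coincide exactly with the reflecting cuts of ${\rm Cay}(G,S)$, since every reflecting plane is by definition required to extend to a reflecting cut of the full Cayley graph. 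Thus the combinatorial data indexing the sums in \eqref{vertex-sol} and \eqref{sol-convex} are the same set of involutive automorphisms $k$.

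The key step is then to transfer the edge-indexed data of edge-convexity to vertex-indexed data. Here I would exploit the vertex-transitivity of ${\rm Cay}(G,S)$ guaranteed by theorem \ref{theorem1}: since $\CZ$ is a Cayley graph with involutive generators it is vertex-transitive, and more usefully each vertex $v$ can be canonically paired with a distinguished incident edge. Concretely, fixing one generator $A\in S$, the $A$-edge incident to a vertex $v$ gives a map $v\mapsto e_v$ from vertices to $A$-edges. A reflecting cut $k$ separates two vertices $v,v'$ precisely when it separates their associated $A$-edges $e_v, e_{v'}$, because a reflecting cut that passes between two adjacent vertices along the $A$-direction must contain that $A$-edge, while otherwise the two endpoints lie on the same side. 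I would use this correspondence to define the coefficients $\CM^{(k)}_{v,v'}$ for the coset graph directly in terms of the edge-convexity solution $\CM^{(k)}_{e_v, e_{v'}}$ furnished by the hypothesis that ${\rm Cay}(G,S)$ is edge-convex, in particular $A$-edge-convex.

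With this dictionary in place, the two structural requirements of vertex-convexity follow by direct substitution. The linear constraint $\sum_{k} \CM^{(k)}_{v,v'}=1$ for all $v,v'$ is inherited verbatim from the $A$-edge-convexity constraint $\sum_k \CM^{(k)}_{e_v,e_{v'}}=1$, since the set of separating reflecting planes maps bijectively to the set of separating reflecting cuts. The positive-semi-definiteness of $\CP^{(k)}(v,v')=\CM^{(k)}_{v,k(v')}$ is likewise inherited from the positive-semi-definiteness of the edge matrices $\CP^{(k)}(e,e')=\CM^{(k)}_{e,k(e')}$, because the vertex-to-edge map $v\mapsto e_v$ intertwines the action of $k$ on vertices with its action on $A$-edges, so the vertex matrix is obtained from the edge matrix by a relabeling of rows and columns by the same bijection, which preserves definiteness.

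I expect the main obstacle to be verifying the claim that the $A$-edge map $v \mapsto e_v$ is a bijection onto $A$-edges that correctly intertwines the separation relation and the $k$-action with no ambiguity. Since each $A$-edge has two endpoints, the assignment $v\mapsto e_v$ is two-to-one onto $A$-edges rather than a bijection, so care is needed: I would instead orient each $A$-edge or pick the endpoint on a fixed side of the relevant cut, and check that the resulting correspondence is compatible with all reflecting cuts simultaneously. Establishing this compatibility cleanly — so that ``$k$ separates $v,v'$'' and ``$k$ separates $e_v,e_{v'}$'' are genuinely equivalent for every $k$ — is the delicate point; once it is settled, both the linear and the semi-definiteness conditions transfer automatically and the lemma is proved.
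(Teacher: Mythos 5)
Your overall strategy coincides with the paper's: identify the vertices of ${\rm Coset}(G/\{e\},S)$ with those of ${\rm Cay}(G,S)$, map each vertex $v$ to its incident $A$-edge $e_v$ for a fixed label $A$, and pull back the $A$-edge-convexity solution $\CM^{(k)}_{e_v,e_{v'}}$ to define $\CM^{(k)}_{v,v'}$. For pairs $v,v'$ \emph{not} joined by an $A$-edge this works exactly as you describe (every cut separating $e_v$ from $e_{v'}$ separates $v$ from $v'$, and cuts that separate $v,v'$ while containing $e_v$ or $e_{v'}$ can be assigned zero), and the positive-semi-definiteness transfers because the vertex matrix is obtained from the edge matrix by composing with the 2-to-1 map $v\mapsto e_v$, i.e.\ it has the form $M^{T}\CP_{\rm edge}M$.

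The genuine gap is the case you flag at the end but do not resolve: a pair $u,v$ that are the two endpoints of a single $A$-edge. For such a pair $e_u=e_v$, so $A$-edge-convexity supplies \emph{no} data whatsoever --- there is no pair of distinct $A$-edges to separate --- and the normalization $\sum_k \CM^{(k)}_{u,v}=1$ must nevertheless hold. Your proposed fixes (orienting the $A$-edges, or choosing the endpoint on a fixed side of the cut) cannot work, because the problem is not one of labeling: the hypothesis simply provides no coefficient for this pair. The paper's resolution is to invoke edge-convexity for a \emph{second} label $B$. The $B$-edges incident on $u$ and $v$ are distinct, so there is a reflecting cut separating them; since it separates $u$ from $v$ and they are adjacent, it must contain the $A$-edge $uv$ and hence reflects $u\leftrightarrow v$. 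Because no two reflecting cuts share an edge, this is the \emph{unique} cut separating the adjacent pair, so assigning $\CM^{(\tilde k)}_{u',v'}=\delta_{u,u'}\delta_{v,v'}$ for that cut satisfies the normalization, and its contribution to $\CP^{(\tilde k)}$ is a diagonal $1$, which is positive semi-definite and adds harmlessly to the pulled-back edge matrix. Without this second-label argument (or an equivalent use of the mirror/edge-reflecting property to produce the cut through $uv$), your proof does not close.
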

With these results at our disposal we are ready to outline our strategy to prove conjecture \ref{mainconj}. 
We will show that if ${\rm Cay}(G_i,S_i), i=1, 2$ is edge-convex for all the edge-labels in $S_i$ then ${\rm Cay}(G_1\otimes G_2, S_1\cup S_2)$ is also edge-convex for all edge-labels. This follows by applying the lemma \ref{coset-general} for  $H=G_1$ and the $G/H= G_2$. It proves the edge-convexity of $G$ for edge-labels in $S_1$. Exchanging the roles of $G_1$ and $G_2$, proves the edge-convexity of $G$ for edges in $S_2$. 
This proves 
\cartesian*

\subsection{Connected Coxeter-Dynkin diagrams}\label{connected-cd}

To prove conjecture \ref{mainconj}, now one has to prove that the $\psi$-graph $\CZ_{\DD}$ is edge-convex where $\DD$ is a connected CD diagram. The $\psi$-graph $\CZ_{I_n}$ is $\CC_n$ and it has already been shown to be edge-convex. We will deal with the CD diagrams of the type  $A_n, B_n (=C_n)$ and $D_n$ separately using mathematical induction on $n$. There are finitely many ``exceptional'' cases that are leftover that we will not analyze and leave for future work. They are $H_3, H_4, F_4, E_6, E_7$ and $E_8$. 

\begin{figure}[t]\label{coxeter-coset}
    \centering
        \includegraphics[width=8cm]{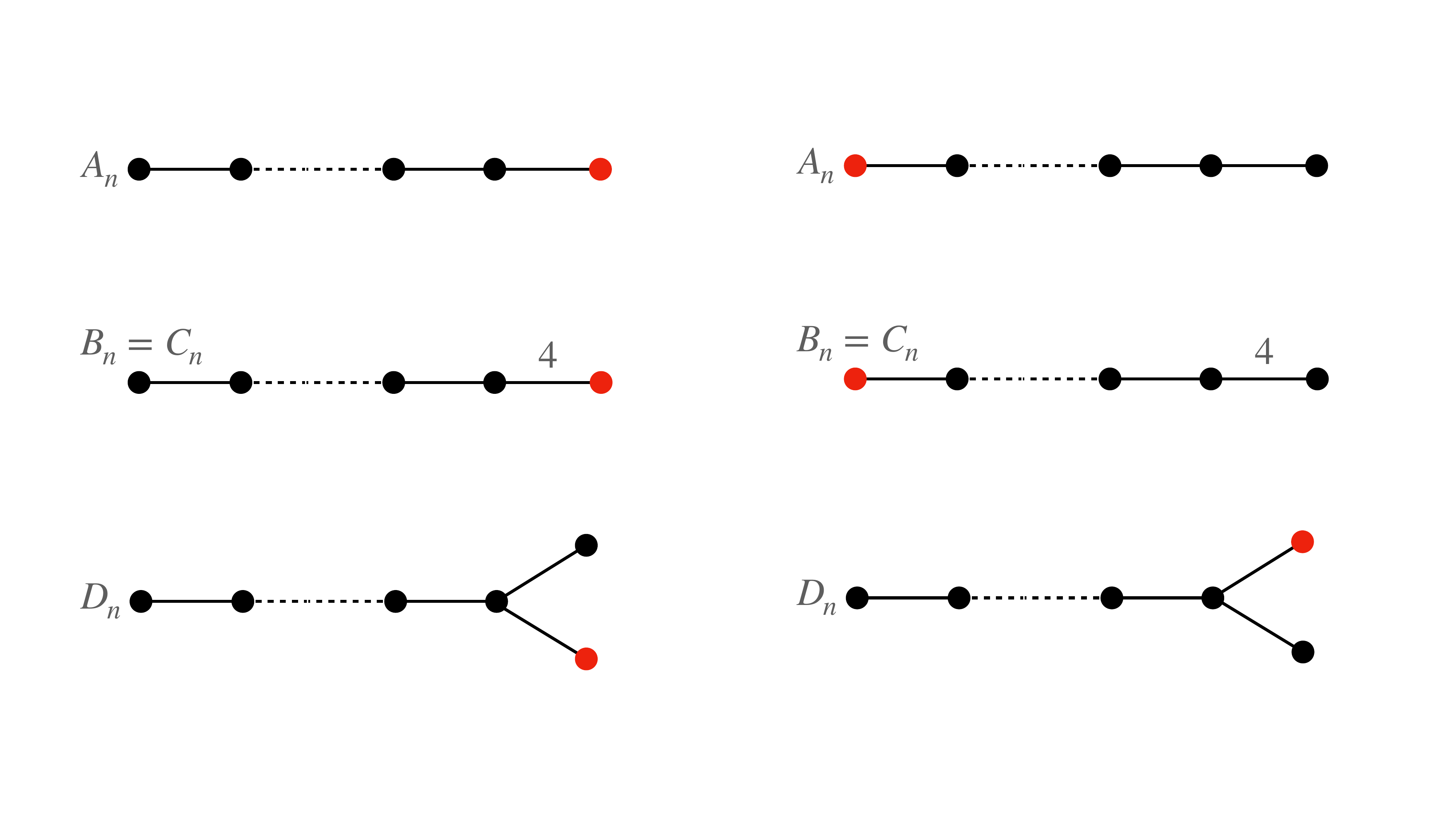}
    \label{fig:my_label}
    \caption{The top two figures show the choice of two $A_{n-1}$ subgroups in $A_n$. The middle two figures show the choice of $A_{n-1}$ and $B_{n-1}$ subgroups in $B_n$. The last two figures show the choice of two $A_{n-1}$ subgroups in $D_{n}$.}
\end{figure}
Let us first consider the case of $\CZ_{A_n}$.
Using lemma \ref{coset-general}, we only need to show that the coset graph $\CZ_{A_n/A_{n-1}}$ is vertex-convex. 
Here we have used the shorthand $\CZ_{A_n/A_{n-1}} :={\rm Coset}(A_n/A_{n-1}, S\setminus K)$ where $S$ and $K$ are canonical involutive generators of $A_n$ and $A_{n-1}$ respectively. The choice of these generators is shown in the first figure of \ref{coxeter-coset}. 
This coset graph has edge of a single color because $S\setminus K$ has only one element. It is the $1$-skeleton of the $n$-simplex. For a given pair of vertices, we consider the reflecting plan that cuts the joining edge symmetrically. This reflecting plan is extendible because the associated reflection extends to the reflection symmetry of ${\CZ}_{A_n}$. For this cut, we set $\CM^{(k)}(v,v')=1$ precisely for the initial vertex pair. This matrix satisfies the condition for vertex-convexity and the associated $\CP$ matrix is  positive semi-definite as it has a single $1$ on the diagonal and the rest are zero. This shows $\CZ_{A_n}$ is $A$-edge-convex for all but one party, the party which corresponds to the coset generator. To show edge-convexity with respect to this party, we pick a different $A_{n-1}$ subgroup of $A_n$, as shown in the second figure of \ref{coxeter-coset} and repeat the argument.  

To show edge-convexity of $\CZ_{B_n}$ we need to show that the coset graphs $\CZ_{B_n/A_{n-1}}$ and $\CZ_{B_n/B_{n-1}}$ are vertex-convex (vertex-convexity of both cosets is required to prove $A$-edge-convexity of $\psi_{B_n}$ for all edge-labels $A$). They are $n$-hypercube and $n$-orthoplex (cross-polytope) respectively. Let's take the case of the hypercube. Consider the reflecting plane that cuts the $n$-hypercube into two copies of $n-1$ hypercubes. We associate $1$ to every pair of vertices that are reflected images of each other. The associated $\CP$ matrix is identity and it clearly satisfies the vertex-convexity condition \eqref{vertex-sol}. The orthoplex consists of vertices $(\pm1, 0,\ldots, 0),(0,\pm 1, \ldots, 0)$ etc. We consider the reflecting plane that is transverse to, say the first axis. Then the first two vertices $(\pm1, 0,\ldots, 0)$ are mirror images of each other and the rest lie on the reflecting plane. We associate $1$ to this pair. In this way we construct the solution to vertex-convexity condition \eqref{vertex-sol}. 

To show the edge-convexity of $\CZ_{D_n}$ we need only to show vertex-convexity of $\CZ_{D_n/{A_{n-1}}}$. The subgroup $A_{n-1}$ can be picked in two ways as shown in the figure. Hence,  vertex-convexity of only $\CZ_{D_n/{A_{n-1}}}$ is sufficient to prove the edge-convexity of $\CZ_{D_n}$. The coset graph $\CZ_{D_n/{A_{n-1}}}$ is called a demi-hypercube. It is constructed by removing alternate vertex of a hypercube. In this case we consider a co-dimension $1$ reflecting plane that is at angle $\pi/4$ in a two dimensional plane. This reflection preserves the color of the hypercube vertices and hence is a reflection of demi-hypercube. 
We associate $1$ to the pair of vertices that are mirror images of each other. This is also a solution to \eqref{vertex-sol}. This proves 
\partialproof*

\section*{Acknowledgements}
We would like to thank Jonathan Harper, Vineeth Krishna, Harshal Kulkarni,  Gautam Mandal, Shiraz Minwalla, Onkar Parrikar, Trakshu Sharma, Piyush Shrivastava, Sandip Trivedi for interesting discussions. 
This work is supported by the Infosys Endowment for the study of the Quantum Structure of Spacetime and by the SERB Ramanujan fellowship.  We acknowledge the support of the Department of Atomic Energy, Government of India, under Project Identification No. RTI 4002. HK would like to thank KVPY DST fellowship for partially supporting his work.
Finally, we acknowledge our debt to the people of India for their steady support to the study of the basic sciences.

\appendix

\section{Cayley graphs, coset graphs and finite Coxeter groups}\label{cayley}

\subsection{Cayley graph and coset graph}
\begin{definition}
    A Cayley graph is an edge-labeled directed graph associated to a group $G$ and a set $S$ of its generators. The vertex set of the Cayley graph is the same as $G$ (as a set). A directed $A$-edge is drawn from vertex $v$ to $v'$ iff $v'=g_A\cdot v$ where $g_A\in S$. We denote the Cayley graph as ${\rm Cay}(G,S)$.
\end{definition}
\noindent 
Cycles in the Cayley graph correspond to the relations obeyed by the generators. If the generators obey relations that are even-length words then all the cycles of the Cayley graph are even-length and hence the graph is bi-partite.

Below we will discuss the construction of coset graph from a given Cayley graph ${\rm Cay}(G,S)$ where the coset is by the subgroup $H$ generated by $K\subset S$. The Schreier coset graph of $G/H$ is obtained by ``collapsing'' all the edges of ${\rm Cay}(G,S)$ whose labels are in $K$. The resulting graph is a vertex-transitive graph with edges labeled by elements of $S\setminus K$. The coset graph is not a $\psi$-graph because its vertices may have multiple edges of a given color incident on it. However, with some abuse of notation, we will denote it as ${\rm Coset}(G/H, S\setminus K)$. 

\subsection{Finite Coxeter groups}\label{coxeter}
A Coxeter group is defined using its generators $r_i$. The only conditions they obey are $(r_i\cdot r_j)^{m_{ij}}=1$ with $m_{ii}=1$. Coxeter classified the matrix $m_{ij}$ such that the group thus defined is finite. The matrix $m_{ij}$ is conveniently expressed as a (slightly modified) adjacency matrix of a graph known as Coxeter-Dynkin (CD) diagram. 
\begin{itemize}
    \item Each node of the CD diagram represents a generator. 
    \item Nodes $i$ and $j$ are joined with an edge labeled $n \geq 4$ if $m_{ij}=n$.
    \item If $m_{ij}=3$, then the edge between $i$ and $j$ is unlabeled.
    \item If $m_{ij}=2$, then the nodes $i$ and $j$ are not connected.
\end{itemize}
Of course, if $m_{ij}=1$ then the two generators $r_i$ and $r_j$ are identical as they obey $r_i^2=r_j^2=r_i r_j= 1$. With this convention, the CD diagram for any finite Coxeter group is given by a disconnected sum of the connected CD diagrams listed in figure \ref{cd-diag}.
\begin{figure}[h]
    \begin{center}
        \includegraphics[scale=0.13]{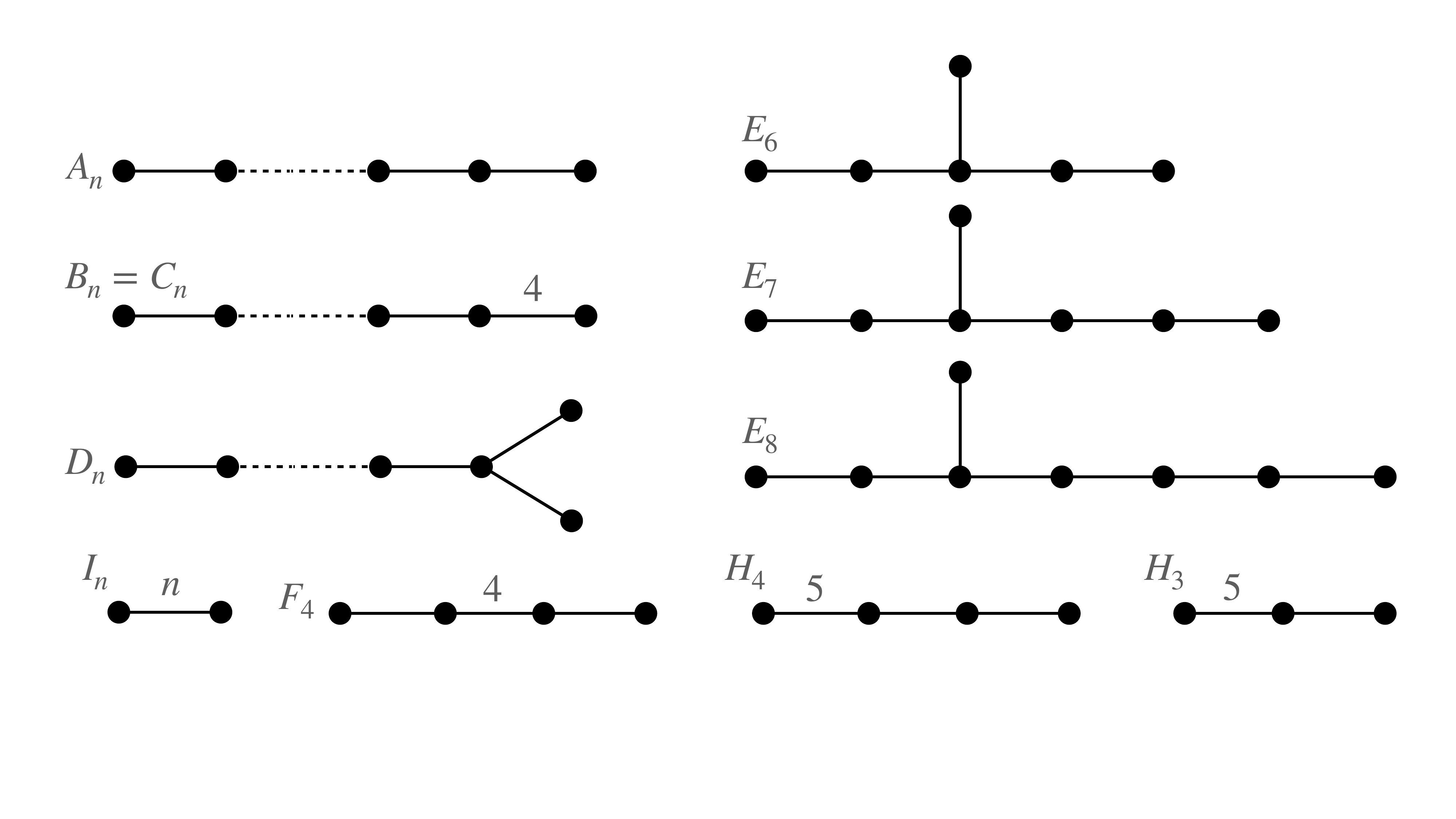}
    \end{center}
    \caption{A finite Coxeter group is represented by a disconnected sum of above Coxeter-Dynkin diagrams.}\label{cd-diag}
\end{figure} 
 
Note that if $\DD_i$ is the CD diagram for Coxeter group presentation $(G_i,S_i)$ for $i=1,2$ then the disconnected sum $\DD_1\sqcup \DD_2$ of $\DD_1$ and $\DD_2$ is the CD diagram for the group presentation $(G_1\otimes G_2,S_1\cup S_2)$. As the CD diagram $D$ informs about the Coxeter group as well as its generators, we denote the associated $\psi$-graph $\psi(G,S)$ simply as $\psi_\DD$.

As remarked in section \ref{sym-multi}, when constructing   $\psi$-graph as the Cayley graph ${\rm Cay}(G,S)$,  each generator  of the group, i.e. element of $S$, corresponds to an edge-label. As a result, if the CD diagram $\DD$ has $\tq$ nodes then $\psi_\DD$ is $\psi$-graph for $\tq$-partite multi-invariant. Denoting the generator for party $A$ as $r_A$, the length of all the loops with alternating $A, B$ edges is identical and is equal to $m_{AB}$.

\section{Proofs of new results}\label{proof-theorem1}

\theoremfirst*
\begin{proof}
    We will first deal with showing the equivalence of the first four statements. Then we will appeal to the theorem 19 of \cite{caucal2016structural}, to show the equivalence $5)\Leftrightarrow 3)$. Among the first four statements, obvious implications are $1)\Rightarrow2)$ and $3)\Rightarrow 4)$. \\
    
    \noindent
    $2)\Rightarrow 3)$: Let us say that we would like to find an even automorphism that maps an $A$-edge $e$ to another $A$-edge $e'$. We find the pair white vertices $v$ and $v'$ on which the edges $e$ and $e'$ are incident respectively. Thanks to even-vertex-transitivity, we can find an even automorphism that maps $v$ to $v'$. Because there is a unique $A$-edge incident on both of them. This automorphism maps $e$ to $e'$ as well.\\
    
    \noindent
    $4)\Rightarrow 1)$: This proof is somewhat lengthy. We will do so by contradiction. Let us assume that here is no automorphism that maps a vertex ${\tilde v}_1$ to another vertex $v_1$. Let us consider the $A$-edges ${\tilde e}_1$ and $e_1$ incident on vertices ${\tilde v}_1$ and $v_1$ respectively. Let the other endpoints of these edges be ${\tilde v}_2$ and ${v}_2'$. Because the $\psi$-graph is all-edge-transitive, there exists an automorphism that maps ${\tilde e}_1$ to $e_1$. This automorphism must map ${\tilde v}_1$ to $v_2$ and ${\tilde v}_2$ to $v_1$. If there exists an  automorphism mapping $v_1$ to $v_2$ then composing the two will give an automorphism between ${\tilde v}_1$ and $v_1$. So there does not exist an automorphism mapping $v_1$ to its neighbor $v_2$. Let us consider the $B$-edges $b_1$ and $b_2$ incident on $v_1$ and $v_2$. Let their other endpoints be $w_1$ and $w_2$ respectively. Due to all-edge-transitivity, we have an automorphism mapping $b_1$ to $b_2$. This must map $v_1$ to $w_2$ and $v_2$ to $w_1$ because if $v_1$ is mapped to $v_2$ then we would have an automorphism between ${\tilde v}_1$ and $v_1$. Now because there an $A$-edge connecting $v_1$ and $v_2$, there must also be an $A$-edge connecting $w_1$ and $w_2$ because of the automorphism mapping $(v_1, v_2)$ to $(w_2, w_1)$. Now we have established existence of a 4-cycle with vertices $v_1\to v_2\to w_2\to w_1 \to v_1$ with the four edges being $ABAB$. Because the $\psi$-graph is edge-transitive, every cycle that alternates between $A, B$ edges must have length $4$ with the pattern being $ABAB$. As the labels $A$ and $B$ are chosen arbitrarily, this conclusion holds for any pair of labels. Let us call this property, the alternating $4$-cycle property $P$. 
    
    Now we will show that if a connected $\psi$-graph with $\tq$-labels has property $P$ then it must be a $\tq$-dimensional hyper-cube. This $\psi$-graph is certainly vertex transitive, leading to a contradiction. We will do this inductively in $\tq$, the number of labels. 
    
    For $\tq=2$, it is true that the only connected $\psi$-graph with  property $P$ is a square. Let us assume that for $\tq$ labels, the $\psi$-graph with property $P$ is the $\tq$-dimensional hypercube. If we consider a $\psi$-graph with $\tq+1$-labels, erasing edges with one label, say $A$ must give us a $\psi$-graph with $\tq$-labels. It must be a union of a number of $\tq$-dimensional hypercubes. Now we would like to connect these hypercubes with $A$-edges to get a new $\psi$-graph. As soon as, we make one connection, we can find a path $BAB$ for some $B$. It must be closed with the addition of an $A$-edge giving rise to a four cycle. This results in a $\tq+1$-labeled $\psi$-graph that is a hypercube. \\
    
    \noindent
    $1)\Leftrightarrow 5)$: The direction $5)\Rightarrow 1)$ is straightforward. We will now prove $1)\Rightarrow 5)$. 
    Sabidussi proved \cite{sabidussi} for unlabelled graphs the theorem: The following statements are equivalent
    \begin{itemize}
        \item A graph is a Cayley graph of $G$.
        \item The group $G$ acts on the graph freely and transitively.
    \end{itemize}
    This already shows that it follows from $1)$ that the unlabelled $\psi$-graph is isomorphic to the (unlabelled) Cayley graph of $\hat {\cal R}$, using remark \ref{vertex-free}. In order to have an isomorphism of the labelled graphs, we need to extend Sabidussi's theorem to labelled graphs. This is done in proposition 9 of \cite{caucal2016structural}.

    \end{proof}

    \begin{restatable}{lemma}{geodesic}\label{geodesic}
        At most one edge in a geodesic can be part of a reflecting cut.
    \end{restatable}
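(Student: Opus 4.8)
The plan is to argue by contradiction with a \emph{folding} (reflection) move, the mechanism that makes a geodesic cross any given wall at most once in a Coxeter group. I read the statement as asserting that a \emph{fixed} reflecting cut meets a given geodesic in at most one edge; the competing reading---that at most one edge of the whole geodesic lies in \emph{any} cut---cannot be intended, since by theorem \ref{theorem2} a mirror $\psi$-graph has every edge in some cut while geodesics are arbitrarily long.

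Concretely, let $\gamma=(x_0,x_1,\dots,x_\ell)$ be a geodesic and suppose, toward a contradiction, that two of its edges $e_1=x_px_{p+1}$ and $e_2=x_qx_{q+1}$ with $p<q$ both lie in a single reflecting cut $E$ with associated odd involution $k$. The only feature of the cut I need is part~1 of the definition: $k$ interchanges the endpoints of every edge of $E$, so $k(x_{p+1})=x_p$ and $k(x_q)=x_{q+1}$. Applying the automorphism $k$ to the interior segment $x_{p+1},\dots,x_q$ produces a walk $k(x_{p+1}),\dots,k(x_q)$ of the same length $q-p-1$ that now runs from $x_p$ to $x_{q+1}$. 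Splicing it in for the original subpath $x_p,\dots,x_{q+1}$ yields a walk from $x_0$ to $x_\ell$ of length $p+(q-p-1)+(\ell-q-1)=\ell-2$, so $d(x_0,x_\ell)\le\ell-2<\ell$, contradicting minimality of $\gamma$.

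The one configuration this folding leaves untouched is $q=p+1$ (adjacent cut edges, empty interior). Here part~1 applied to $e_1$ and $e_2$ forces $k(x_{p+1})=x_p$ and $k(x_{p+1})=x_{p+2}$ simultaneously, hence $x_p=x_{p+2}$, which is impossible since a geodesic has distinct vertices; so this case never arises. I expect the only genuine care to be bookkeeping: checking that the spliced object is an honest walk (adjacencies survive because $k$ is a graph automorphism, and the two splice points match precisely because $k(x_{p+1})=x_p$ and $k(x_q)=x_{q+1}$) and that the edge count comes out to exactly $\ell-2$. A reassuring structural feature, which I would highlight, is that the argument never invokes part~2 of the reflecting-cut definition (the two-component $\CT_1\leftrightarrow\CT_2$ structure): it rests solely on $k$ being an adjacency-preserving endpoint-swap on the cut edges.
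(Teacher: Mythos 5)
Your proof is correct and is essentially the same folding argument as the paper's: replace the segment between the two cut edges by its image under the odd involution $k$ to produce a walk two edges shorter, contradicting minimality. You are in fact slightly more careful than the paper in explicitly disposing of the adjacent-edge case $q=p+1$, but the mechanism is identical.
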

    \begin{proof}
        Let us number the vertices along the geodesic $p$ as $u_0, u_1,\ldots, u_{|p|}$ such that $u_0=u$ and $u_{|p|}=v$.  Assuming the reflecting cut $k$ cuts the path $p$ more than once, let the first two cuts be right after $u_i$ and $u_j$. The images of $u_i$ and $u_j$ under $k$ are $u_{i+1}$ and $u_{{j+1}}$. Replacing the segment of the path $u_{i+1}\to u_j$ by its image under $k$, we get a new path between $u$ and $v$ of length $|p|-2$. This argument is shown graphically in figure \ref{double-cut}. This contradicts the assumption that the original path $p$ is the shortest.
        \begin{figure}[h]
            \begin{center}
                \includegraphics[scale=0.2]{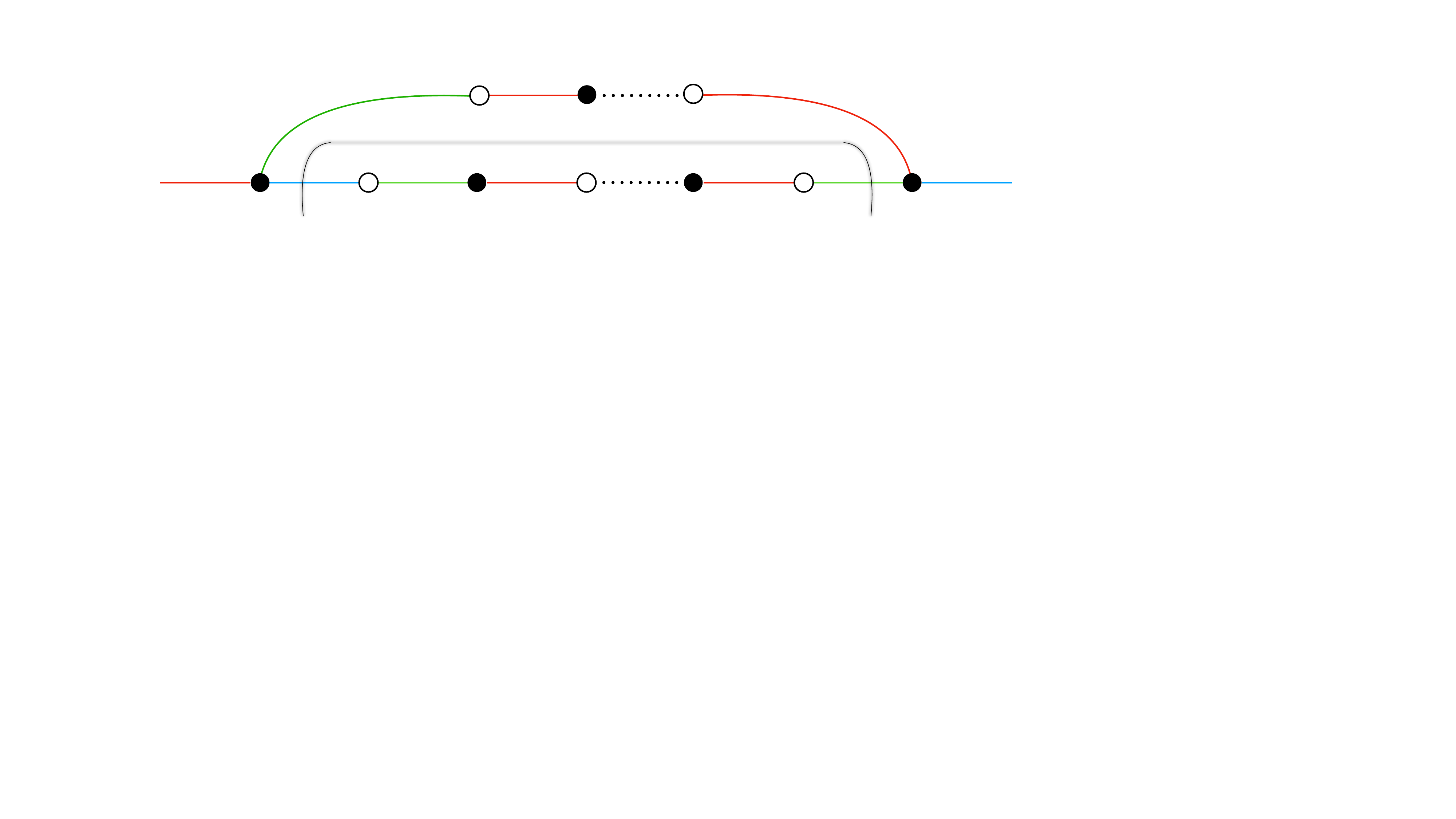}
            \end{center}
            \caption{The reflecting cut is denoted by a black curve. It cuts the path from $u$ to $v$ in two places. The path above the black curve is the image of the segment in the original path. It is clear that the new path obtained is shorter than the original one by two edges.}\label{double-cut}
        \end{figure}
    \end{proof}

\theoremsecond*
\begin{proof}
    $1)\Rightarrow 2)$: Let $v_1, v_2$ be an arbitrary pair of vertices. Consider $A$-edges $e^{(A)}_1$ and $e^{(A)}_2$ that are incident on $v_1$ and $v_2$ respectively. If for all $A$, $e^{(A)}_1=e^{(A)}_2$ then $v_1$ and $v_2$ form a disconnected component with two vertices. All the edges of this disconnected component form the reflecting cut separating $v_1$ and $v_2$. If $e^{(A)}_1\neq e^{(A)}_2$ for some $A$, then edge-reflecting property implies the existence of a cut separating them. This cut also  separates $v_1,v_2$.
    Let $e=(u,v)$ be an arbitrary edge. We just showed that  we can find a reflecting cut separating $u$ and $v$. Edge $e$ must be a part of this cut.\\

    $2)\Rightarrow 1)$: Consider a pair of $A$-edges $e_1=(u_1, v_1)$ and $e_2=(u_2, v_2)$. Without loss of generality, let $u_1$ and $u_2$ be the vertices of $e_1, e_2$ that are closest. Let $p$ be a geodesic path between them. Then $e_1-p$ and $p-e_2$ are also geodesics. Because if this were not true then $v_1, u_2$ would also be the closest. Let $p'$ be the geodesic between them with $|p|=|p'|$. This gives an odd cycle $p-e_1-p'$. This is inconsistent with the bi-partite property. Consider an edge $e\in p$. Thanks to lemma \ref{geodesic}, the reflecting cut containing $e$ does not cut $p$ elsewhere. Also it contains neither $e_1$, nor $e_2$. Hence this cut separates $e_1$ and $e_2$.\\

    $2)\Leftrightarrow 3)$: Our definition of mirror $\psi$-graph is equivalent to the definition of mirror graphs given in \cite{BRESAR200455, MARC2017115}. Theorem $2.8$ of \cite{MARC2017115} shows $2)\Leftrightarrow 3)$. Below we give a sketch of the proof. First, it is shown that a mirror graph is a Cayley graph. In our paper, this follows from the implications $2) \Rightarrow 1)$ of this theorem, remark \ref{transitive} and theorem \ref{theorem1}. In fact, this proves a stronger statement viz. that a mirror graph is a Cayley graph with involutive generators. It is then shown that every pair of neighboring edges, say $e_A$ and $e_B$, lies on a unique convex cycle defined by the relation $(s_A s_B)^{m_{AB}}$ where $s_A$ and $s_B$ are the group generators associated to edges $e_A$ and $e_B$ respectively. It is then used that the 2-cell-complex of mirror graphs is simply connected to show that every other cycle must be generated by the convex cycles of above type. This describes Cayley graph of Coxeter group with standard involutive generators. 
\end{proof}

\cosetlemma*
\begin{proof}
    We will explicitly compute the matrix $\CM^{(k)}(e,e')$ for a reflecting cut $k$ of ${\rm Cay}(G,S)$ appearing in the definition \ref{vertex-sol}. 
    \begin{align}
        \CM^{(k)}(e,e') = \CM^{(H_k)}(e,e')
    \end{align}
    if $e,e'$ edges are in the same connected component of the graph obtained after deleting  edges in $S\setminus K$, if that component is cut by $k$ and 
    \begin{align}
        \CM^{(k)}(e,e') = \CM^{(C_k)}_{v,v'}
    \end{align}
    if $e,e'$ edges are in the  connected component corresponding to vertices $v,v'$ respectively of the coset graph. Here $C_k$ is the projection of cut $k$ as a reflecting plane of the coset graph and $H_k$ is the restriction of the reflecting cut to the subgroup $H$ if the reflecting cut $k$ is also a reflecting cut of one of the connected components obtained after deleting edges in $S\setminus K$. The reflecting cut $H_k$ of a connected component extends in a unique way to the reflecting cut of the full graph. This, along with the vertex-convexity of the coset graph ensures that the stated matrix solves the condition \eqref{sol-convex}. 
    \end{proof}

\edgeverte*
\begin{proof}
    We will only make use of $A$-edge-convexity for two edge-labels $A$ to prove this result. 
    If the graph is $\CE^{(1)}$, it is vertex-convex. Let us assume that it is not $\CE^{(1)}$.
    Let us consider all the reflecting cuts $k$ separating $A$-edges. As proved above, those are also the reflecting cuts separating any pair of vertices as long as the pair is not connected by an $A$-edge. For such pairs we define
    \begin{align}
        \CM^{(k)}_{u,v} = \CM^{(k)}_{e_u,e_v} 
    \end{align}
    where $e_u$ and $e_v$ are the $A$-edges incident on $u$ and $v$ respectively. This is a positive definite matrix. 
    When $u$ and $v$ are connected by $A$-edge, consider a different type of edge, say $B$. The $B$-edge incident on $u$ and $v$ are then distinct. Let $\tilde k$ be the reflecting cut separating these $B$-edges. It must cut the $A$-edge joining $u$ and $v$. Because of the reflecting cut property, $u$ and $v$ are images of each other under the reflecting cut $k^*$. This can be done for every pair of $u$ and $v$ that is connected by $A$-edge. Let the associated reflecting cut be $\tilde k_{V(u,v)}$. Then for such pairs we take
    \begin{align}
        \CM^{(\tilde k_{V(u,v)})}_{u',v'}  = \delta_{u,u'}\delta_{v,v'}.
    \end{align}
    This is also a positive definite matrix (a single entry of $1$ on the diagonal). This shows that the graph is also vertex-convex. 
\end{proof}

\bibliography{LargeDCFT}

\end{document}